\newtheorem{theorem}{Theorem}[section]
\newtheorem{corollary}[theorem]{Corollary}
\newtheorem{lemma}[theorem]{Lemma}
\newtheorem{fact}[theorem]{Fact}
\newtheorem{definition}[theorem]{Definition}
\newtheorem{example}[theorem]{Example}
\newenvironment{proof}{\noindent {\bf Proof}.\ }{\ \\}
\newcommand{\spath}[1]{\mbox{[$#1$]-}path}
\newcommand{\adj}[1]{\mbox{[$#1$]-}adjacent}
\newcommand{\component}[1]{\mbox{[$#1$]-}component}
\newcommand{\components}[1]{\mbox{[$#1$]-}components}
\newcommand{\connected}[1]{\mbox{[$#1$]-}connected}
\newcommand{\touches}[1]{\mbox{[$#1$]-}touches}
\newcommand{\nodes}{\mathit{nodes}}
\newcommand{\edges}{\mathit{edges}}
\newcommand{\HD}{H\!D}
\newcommand{\HG}{{\cal H}}
\newcommand{\JT}{J\!T}
\newcommand{\node}{{\mathcal N}}
\renewcommand{\root}{\mathit{root}}
\newcommand{\vertices}{\mathit{vertices}}
\newcommand{\Pol}{\mbox{\rm P}}
\newcommand{\NP}{\mbox{\rm NP}}
\newcommand{\DB}{{\rm \mbox{\rm DB}}}
\newcommand{\vars}{\mathit{vars}}
\newcommand{\atoms}{\mathit{atoms}}
\newcommand{\V}{\mathcal{V}}
\newcommand{\C}{\mathcal{C}}
\newcommand{\E}{\mathrm{ED}}
\newcommand{\F}{\mathrm{Fr}}
\newcommand{\ecomponent}[1]{\mbox{[$#1$]-}option}
\newcommand{\ecomponents}[1]{\mbox{[$#1$]-}options}
\newcommand{\iecomponent}[1]{\mbox{\em [$#1$]-}option}
\newcommand{\tuple}[1]{\langle#1\rangle}
\newcommand{\nop}[1]{}
\newcommand{\longv}[1]{}
\journal{Information and Computation}
\begin{document}

\begin{frontmatter}

\title{Greedy Strategies and Larger Islands of Tractability for Conjunctive Queries and Constraint Satisfaction Problems}

\author[math]{Gianluigi Greco}
\ead{ggreco@mat.unical.it}
\author[deis]{Francesco Scarcello}
\ead{scarcello@dimes.unical.it}

\address[math]{Dipartimento di Matematica e Informatica, Universit\`a della Calabria, I-87036 Rende(CS), Italy}
\address[deis]{DIMES, Universit\`a della Calabria, I-87036 Rende(CS), Italy}

\begin{abstract}
Structural decomposition methods have been developed for identifying tractable classes of instances of fundamental problems in databases, such
as conjunctive queries and query containment, of the constraint satisfaction problem in artificial intelligence, or more generally of the
homomorphism problem over relational structures.
These methods work on the hypergraph structure of problem instances. Each method provides a way of transforming any cyclic hypergraph into an
acyclic one, by organizing its edges (or its nodes) into a polynomial number of clusters, and by suitably arranging these clusters as a tree,
called decomposition tree. Then, by using such a tree (or by just knowing that any exists) the given problem instance can be solved in
polynomial time.

Most structural decomposition methods can be characterized through hypergraph games that are variations of the \emph{Robber and Cops} graph
game that characterizes the notion of treewidth. In particular, decomposition trees somehow correspond to \emph{monotone winning strategies},
where the escape space of the robber on the hypergraph is shrunk monotonically by the cops. In fact, unlike the treewidth case, there are
hypergraphs where monotonic strategies do not exist, while the robber can be captured by means of more complex non-monotonic strategies.
However, these powerful strategies do not correspond in general to valid decompositions.

The paper provides a general way to exploit the power of non-monotonic strategies, by allowing a ``disciplined'' form of non-monotonicity,
characteristic of cops playing in a \emph{greedy} way. It is shown that deciding the existence of a (non-monotone) greedy winning strategy (and
compute one, if any) is tractable. Moreover, despite their non-monotonicity, such strategies always induce valid decomposition trees, which can
be computed efficiently based on them.
%
As a consequence, greedy strategies allow us to define new islands of tractability for the considered problems, properly including all
previously known classes of tractable instances. In particular, we define the new notion of greedy hypertree decomposition of a hypergraph,
whose associated notion of width is at most the hypertree width, and sometimes strictly smaller.
\end{abstract}

\begin{keyword}
Structural Decomposition Methods \sep Games on Discrete Structures \sep Conjunctive Queries and Databases \sep Constraint Satisfaction Problems \sep Hypertree Decompositions \sep Tree Projections \sep Homomorphism Problem
\end{keyword}
\end{frontmatter}

\newpage

\section{Introduction}

We look for islands of tractability for answering conjunctive queries over relational databases or, equivalently, for solving constraint
satisfaction problems. For the sake of presentation, we next focus on the database setting and the conjunctive query answering problem. We
remark that all results can immediately be applied to all problems that can be recast as homomorphism  problems,  and possibly can be useful in
further settings, thanks to the general combinatorial nature of the proposed approach. We refer the interested reader to~\cite{GGSBook} for
more detail on the connections with the homomorphism problem and with further equivalent problems.

\subsection{Acyclic Conjunctive Queries}\label{sec:acyclic}

Conjunctive queries are defined through conjunctions of atoms (without negation), and are known to be equivalent to Select-Project-Join
queries. The problem of evaluating such queries is $\NP$-hard in general, but it is feasible in polynomial time on the class of acyclic queries
(we omit ``conjunctive,'' hereafter), which was the subject of many seminal research works since the early ages of database theory (see,
e.g.,~\cite{BFMY83}). This class contains all queries $Q$ whose associated query hypergraph $\HG_Q$ is acyclic,\footnote{For completeness,
observe that different notions of hypergraph acyclicity have been proposed in the literature. This paper follows the standard definition of
acyclic conjunctive queries, so that hypergraph acyclicity always refers to the most liberal notion, known as
$\alpha$-acyclicity~\cite{fagi-83}.} where $\HG_Q$ is a hypergraph having the variables of $Q$ as its nodes, and the (sets of variables
occurring in the) atoms of $Q$ as its hyperedges.
In fact, queries arising from real applications are hardly precisely acyclic. Yet, they are often not very intricate and, in fact, tend to
exhibit some limited degree of cyclicity, which suffices to retain most of the nice properties of acyclic ones.
Therefore, several efforts have been spent to investigate invariants that are best suited to identify nearly-acyclic hypergraphs, leading to
the definition of a number of so-called {\em (purely) structural decomposition-methods}, such as the \emph{(generalized)
hypertree}~\cite{gott-etal-99}, \emph{fractional hypertree}~\cite{GM06}, \emph{spread-cut}~\cite{CJG08}, and \emph{component
hypertree}~\cite{GMS07} decompositions. These methods aim at transforming a given cyclic hypergraph into an acyclic one, by organizing its
edges (or its nodes) into a polynomial number of clusters, and by suitably arranging these clusters as a tree, called decomposition tree. The
original problem instance can then be evaluated over such a tree of subproblems, with a cost that is exponential in the cardinality of the
largest cluster, also called {\em width} of the decomposition, and polynomial if this width is bounded by some constant.

Despite their different technical definitions,  there is a simple mathematical framework, based on the notion of \emph{tree
projection}~\cite{GS84}, that encompasses all the above decomposition methods, as pointed out in recent works on the subject~\cite{GS08,GS10b}.
In this setting, a query $Q$ is given together with a set $\V$ of atoms, called views, which are defined over the variables in $Q$.
The question is whether (parts of) the views can be arranged as to form a tree projection (playing the role of a decomposition tree), i.e., a
novel acyclic query that still ``covers'' $Q$.
By representing $Q$ and $\V$ via the hypergraphs $\HG_Q$ and $\HG_\V$, where hyperedges one-to-one correspond with query atoms and views,
respectively, the tree projection problem reveals its graph-theoretic nature. For a pair of hypergraphs $\HG_1,\HG_2$, let $\HG_1\leq \HG_2$
denote that each hyperedge of $\HG_1$ is contained in some hyperedge of $\HG_2$. Then, a tree projection of $\HG_Q$ w.r.t.~$\HG_\V$ is any
acyclic hypergraph $\HG_a$ such that $\HG_Q\leq \HG_a\leq \HG_\V$. If such a hypergraph exists, then we say that the pair of hypergraphs
$(\HG_Q,\HG_\V)$ has a tree projection.\footnote{Note that the only known decomposition technique that does not fit the above framework is the
one based on the \emph{submodular width}~\cite{M10}. This method is in fact not ``purely'' structural, in that the views $\V$, together with
suitable associated database relations, are computed in fixed-parameter polynomial time (hence, not in polynomial-time, in general) by using
the actual database over which $Q$ has to be evaluated, rather than looking at $\HG_Q$ only.}

\begin{figure}[t]
\centering
\includegraphics[width=0.99\textwidth]{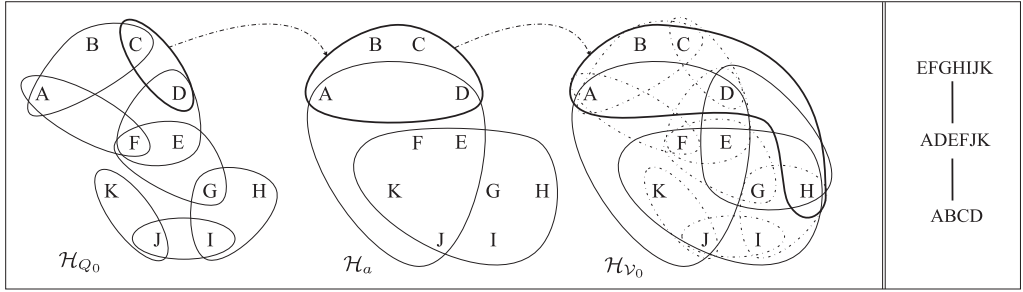}
\caption{A tree projection $\HG_a$ of $\HG_{Q_0}$ w.r.t.~$\HG_{\V_0}$; 
On the right: A join tree $\JT_a$ for $\HG_a$.
}\label{fig:hypergraph}
\end{figure}

\begin{example}\label{ex:intro}\em Consider the conjunctive query
\vspace{-1mm}
\[
\begin{array}{ll}
Q_0:  & r_1(A,B,C)\wedge r_2(A,F)\wedge r_3(C,D)\wedge r_4(D,E,F)\wedge\\
& r_5(E,F,G) \wedge r_6(G,H,I)\wedge r_7(I,J)\wedge r_8(J,K),\\
\end{array}
\]

\vspace{-1mm}\noindent whose associated hypergraph $\HG_{Q_0}$ is depicted in Figure~\ref{fig:hypergraph}, together with other hypergraphs that
are discussed next.

To answer $Q_0$, assume that a set $\V_0$ of views is available comprising some views, called {\em query views}, playing the role of query
atoms, plus four additional views. The set of variables of each view is a hyperedge in the hypergraph $\HG_{\V_0}$ (query views are depicted as
dashed hyperedges).
In the middle between $\HG_{Q_0}$ and $\HG_{\V_0}$, Figure~\ref{fig:hypergraph} reports the hypergraph $\HG_a$ which covers $\HG_{Q_0}$, and
which is in its turn covered by $\HG_{\V_0}$---e.g., $\{C,D\}\subseteq \{A,B,C,D\}\subseteq\{A,B,C,D,H\}$. Since $\HG_a$ is in addition
acyclic, $\HG_a$ is a tree projection of $\HG_{Q_0}$ w.r.t.~$\HG_{\V_0}$.~\hfill~$\lhd$
\end{example}

Observe that, in the tree projection framework, views can be arbitrary, i.e, they do not depend on the specific conjunctive query $Q$, and can
be reused to answer different queries. In particular, views may be the materialized output of any procedure over the database, possibly much
more powerful than conjunctive queries.
Moreover, it is known and easy to see that any decomposition method based on clustering subproblems can be viewed as an instance of this
general setting, identifying  a specific set of views to answer a given query $Q$ efficiently (see Section~\ref{sec:framework}).

\subsection{Islands of Tractability}

An \emph{island of tractability} (cf.~\cite{K03}) in the tree projection  framework is a class $\C$ of pairs $(Q,\V)$ that can be efficiently
recognized, i.e., we can check in polynomial time whether a given pair actually belongs to $\C$, and such that $Q$ can be efficiently evaluated
on every database, by possibly exploiting the views that are available in~$\V$.

Many specializations of tree projections, such as \emph{tree decompositions}~\cite{RS84}, \emph{hypertree
decompositions}~\cite{gott-etal-99},\emph{component decompositions}~\cite{GMS07}, and \emph{spread-cuts decompositions}~\cite{CJG08}, define
islands of tractability whenever some fixed bound is imposed on their widths. This is also the case for \emph{fractional hypertree
decompositions}~\cite{GM06}, whenever the resources sufficient for computing their $O(w^3)$ approximation~\cite{M09} are used as available
views.
However, this is not the case for general tree projections. Indeed, while Goodman and Shmueli~\cite{GS84} observed that queries that admit a
tree projection can be evaluated in polynomial time, Gottlob et al.~\cite{GMS07} proved that checking whether a tree projection exists or not
is an $\NP$-hard problem. Hence, the class $\C_{tp}=\{(Q,\V) \mid \HG_Q \mbox{ has a tree projection w.r.t.~} \HG_\V  \}$, which includes all
the above mentioned islands of tractability, is not an island of tractability in its turn.
%
%
%
A natural question is, therefore, whether there is any subclass of $\C_{tp}$, at least including all the tractable classes mentioned above,
which identifies an actual island of tractability where tree projections can be computed efficiently.

In the paper, we address the above question.
The starting point of our analysis is the game-theoretic characterization of tree projections in terms of the \emph{Robber and Captain}
game~\cite{GS08}. The game is played on a pair of hypergraphs $(\HG_1,\HG_2)$ by a Captain controlling, at each move, a squads of cops encoded
as the nodes in a hyperedge $h\in \edges(\HG_2)$, and by a Robber who stands on a node and can run at great speed along the edges of $\HG_1$,
while being not permitted to run trough a node that is controlled by a cop. In particular, the Captain may ask any cop in the squad $h$ to run
in action, as long as they occupy nodes that are currently reachable by the Robber, thereby blocking an escape path for the Robber. While cops
move, the Robber may run trough those positions that are left by cops or not yet occupied. The goal of the Captain is to place a cop on the
node occupied by the Robber, while the Robber tries to avoid her capture. The Captain has a winning strategy if, and only if, there is a tree
projection of $\HG_1$ w.r.t.~$\HG_2$.

Based on the above characterization, we proceed as follows:

\begin{itemize}
\item[$\blacktriangleright$] We define the notion of {\em greedy strategies}, which are winning strategies for the Captain, possibly
    non-monotone, where it is required that \emph{all} cops available at the current squad $h$ and reachable by the Robber enter in action.
    If all of them are in action, then a new squad $h'$ is selected, again requiring that all the active cops, i.e., those in the frontier,
    enter in action. In the Robber and Captain game, it is known that there is no incentive for the Captain to play a strategy that is not
    monotone~\cite{GS08}. Instead, by focusing on greedy strategies, we can exhibit examples where there exists non-monotone winning
    strategies but no monotone winning one.

\item[$\blacktriangleright$] We show that {\em greedy strategies} can be computed in polynomial time, and that based on them (even on
    non-monotone ones) it is possible to construct, again in polynomial time, tree projections, which are called \emph{greedy}. Therefore,
    the class $\C_{gtp}\subset \C_{tp}$ of all {greedy tree projections} turns out to be an island of tractability.

\item[$\blacktriangleright$] We show that $\C_{gtp}$ properly includes most previously known islands of tractability (based on structural
    properties), precisely because of the power of non-monotonic strategies. Indeed, (arbitrary) non-monotone strategies do not correspond
    in general to valid decompositions in the games characterizing such islands of tractability, which are in fact defined in terms of
    monotone strategies only. The novel notion of greedy tree projections allows us to define new islands of tractability from any known
    structural decomposition method. In particular, from the notion of generalized hypertree decomposition, we obtain the novel notion of
    {\em greedy (generalized) hypertree decomposition}, that is tractable and strictly more powerful than the hypertree decomposition
    (which is instead characterized by a monotonic hypergraph game).

\item[$\blacktriangleright$] Finally, by using the game theoretic characterization of tree projections, we pinpoint that dealing with this
    general $\NP$-hard notion is fixed-parameter tractable if the maximum arity of views is used as the parameter. Even this result can
    be useful in real-world applications, since the case of small arity structures is quite frequent in practice.
\end{itemize}

\noindent \textbf{Organization.}
The rest of paper is organized as follows.
Section~\ref{sec:framework} illustrates some basic notions and concepts. Greedy strategies for the Robber and Captain game are introduced and
analyzed in Section~\ref{SECGAMES}, and based on them islands of tractability for tree projections are singled out in Section~\ref{sec:larger}.
Specializations of the results to known structural decomposition methods (as well as to structures having ``small'' arities) are discussed in
Section~\ref{sec:methods}.
Literature related to ``Cops and Robbers'' games is illustrated in Section~\ref{sec:related}, while a few remarks and open issues are
discussed in Section~\ref{sec:conclusion}.

\section{Preliminaries}\label{sec:framework}

\noindent \textbf{Hypergraphs and Acyclicity.} A \emph{hypergraph} $\HG$ is a pair $(V,H)$, where $V$ is a finite set of nodes and $H$ is a set
of hyperedges such that, for each $h\in H$, $h\subseteq V$.
If $|h|=2$ for each (hyper)edge $h\in H$, then $\HG$ is a {\em graph}.
We assume without loss of generality that every node occurs in some hyperedge, that is, $V= \bigcup_{h\in H} h$.
We denote $V$ and $H$ by $\nodes(\HG)$ and $\edges(\HG)$, respectively.

A hypergraph $\HG$ is {\em acyclic} (more precisely, $\alpha$-acyclic~\cite{fagi-83}) if, and only if, it has a join tree~\cite{bern-good-81}.
A {\em join tree} $\JT$ for a hypergraph $\HG$ is a tree whose vertices are the hyperedges of $\HG$ such that, whenever a node $X\in V$ occurs
in two hyperedges $h_1$ and $h_2$ of $\HG$, then $h_1$ and $h_2$ are connected in $\JT$, and $X$ occurs in each vertex on the unique path
linking $h_1$ and $h_2$. In words, the set of vertices in which $X$ occurs induces a (connected) subtree of $\JT$. We will refer to this
condition as the {\em connectedness condition} of join trees.

\begin{example}\label{ex:sec2}\em
Consider the hypergraph $\HG_a$ reported in Figure~\ref{fig:hypergraph}. We have $\nodes(\HG_a)=\{A,B,C,D,E,F,G,H,I,J,K\}$ and
$\edges(\HG_a)=\{\{A,B,C,D\},$ $\{A,D,E,F,J,K\},$ $\{E,$ $F,G,H,I,J,K\}\}$. The hypergraph is acyclic, as it is witnessed by the join tree
$\JT_a$ depicted on the right part of the same figure. \hfill $\lhd$
\end{example}

\medskip \noindent \textbf{Tree Decompositions of (Hyper)graphs.}
Several efforts have been spent in the literature to investigate hypergraph properties that are best suited to identify nearly-acyclic
hypergraphs, leading to the definition of a number of so-called {\em (purely) structural decomposition methods}.
Within these methods, the notions of tree decomposition and treewidth~\cite{RS84} represent a significant success story in Computer Science
(see, e.g., \cite{GGSBook}), which are meant to provide a measure of the degree of cyclicity in \emph{graphs}.

A \emph{tree decomposition}~\cite{RS84} of a graph $G$ is a pair $\tuple{T,\chi}$, where $T=(N,E)$ is a tree, and $\chi$ is a labeling
function assigning to each vertex $v\in N$ a set of vertices $\chi(v)\subseteq \nodes(G)$, such that the following conditions are satisfied:
(1) for each node $Y\in \nodes(G)$, there exists $p\in N$ such that $Y\in \chi(p)$; (2) for each edge  $\{X,Y\}\in \edges(G)$, there exists
$p\in N$ such that $\{X,Y\}\subseteq \chi(p)$; and (3) for each node $Y\in \nodes(G)$, the set $\{p\in N \mid  Y\in \chi(p)\}$ induces a
(connected) subtree of $T$. The \emph{width} of $\tuple{T,\chi}$ is the number $\max_{p\in N}(|\chi(p)|-1)$.

For the application of the notion of treewidth over an arbitrary hypergraph, it is necessary to deal with a graph-based representation of its
associated hypergraph. There are a number of possible choices, and we next focus on the simplest and widely used one.

The \emph{Gaifman graph} of a hypergraph $\HG$ is defined over the set $\nodes(\HG)$ of the nodes of $\HG$, and contains an edge $\{X,Y\}$ if,
and only if, $\{X,Y\}\subseteq h$ holds, for some hyperedge $h\in\edges(\HG)$. The {\em treewidth} of $\HG$ is the minimum width over all the
tree decompositions of its Gaifman graph. Deciding whether a given hypergraph has treewidth bounded by a fixed natural number $k$ is known to
be feasible in linear time~\cite{bodl-96}.

\begin{figure}[t]
 \centering
 \includegraphics[width=0.9\textwidth]{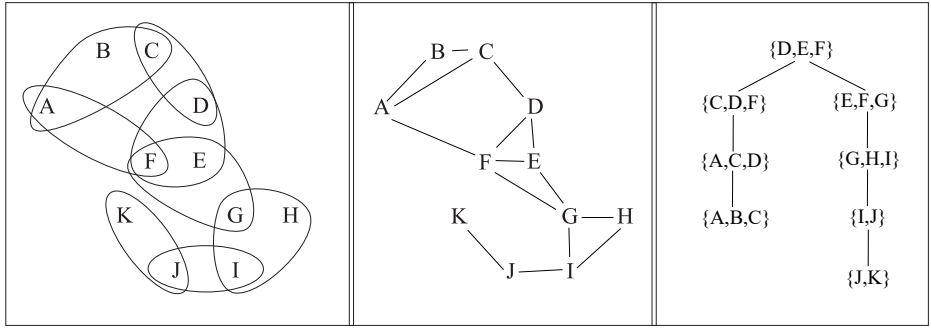}
 \caption{The hypergraph $\HG_{Q_0}$, its Gaifman graph, and a tree decomposition of it.}\label{fig:sec2}
\end{figure}

\begin{example}\em
Consider the hypergraph $\HG_{Q_0}$ discussed in Example~\ref{ex:intro} and reported again in Figure~\ref{fig:sec2}, for the sake of
readability. The hypergraph $\HG_{Q_0}$ is not acyclic, as it is not possible to build a join tree for it.
In fact,  Figure~\ref{fig:sec2} also reports the Gaifman graph of $\HG_{Q_0}$ and a tree decomposition of it. Note that there are vertices of
the tree decomposition containing 4 nodes of $\HG_{Q_0}$. Indeed, the treewidth of $\HG_{Q_0}$ is 3. \hfill $\lhd$
\end{example}

\medskip \noindent \textbf{(Generalized) Hypertree Decompositions of Hypergraphs.}
A crucial limitation for the practical use of the tree decomposition method is that it applies to graph representations only, hence obscuring
in many cases the actual degree of cyclicity of the original hypergraph.
For instance, for the acyclic hypergraph $\HG_a$ depicted in Figure~\ref{fig:hypergraph}, the Gaifman graph contains a clique over the
variables in $\{A,D,E,F,J,K\}$, since all of them occur together in one hyperedge. Hence, the treewidth of this acyclic hypergraph is 5.
Motivated by this observation, specific width-notions for hypergraphs have been defined and studied, and often these are more effective than
simply applying the treewidth on a suitable ``binarization''~\cite{GS10c}. In particular, the natural counterpart of the tree decomposition
method over hypergraphs is the notion of (generalized) hypertree decomposition~\cite{gott-etal-03} (see~\cite{GGLS16} for a survey on recent
advances and applications).

A {\em hypertree for a hypergraph $\HG$} is a triple $\tuple{T,\chi,\lambda}$, where $T=(N,E)$ is a rooted tree, and $\chi$ and $\lambda$ are
labeling functions which associate each vertex $p\in N$ with two sets $\chi(p)\subseteq \nodes(\HG)$ and $\lambda(p)\subseteq \edges(\HG)$. If
$T'=(N',E')$ is a subtree of $T$, we define $\chi(T')= \bigcup_{v\in N'} \chi(v)$.
In the following, for any rooted tree $T$, we denote the set of vertices $N$ of $T$ by $\vertices(T)$, and the root of $T$ by $\root(T)$.
Moreover, for any $p\in N$, $T_p$ denotes the subtree of $T$ rooted at $p$.

A {\em generalized hypertree decomposition}~\cite{gott-etal-03} of a hypergraph $\HG$ is a hypertree $\HD=\tuple{T,\chi,\lambda}$ for $\HG$
such that: (1) for each hyperedge $h\in \edges(\HG)$, there exists $p\in \vertices(T)$ such that $h\subseteq \chi(p)$; (2) for each node $Y\in
\nodes(\HG)$, the set $\{ p\in \vertices(T) \mid  Y \in \chi(p) \}$ induces a (connected) subtree of $T$; and (3) for each $p\in \vertices(T)$,
$\chi(p)\subseteq \nodes(\lambda(p))$.
The {\em width} of a generalized hypertree decomposition $\tuple{T,\chi,\lambda}$ is $max_{p\in \vertices(T)} |\lambda(p)|$. The {\em
generalized hypertree width} $ghw(\HG)$ of $\HG$ is the minimum width over all its
generalized hypertree decompositions. 
The notions is a true generalizations of acyclicity, as the acyclic hypergraphs are precisely those hypergraphs having generalized hypertree
width one.

Note that conditions (1) and (2) above state that $\tuple{T,\chi}$ is a tree decomposition of the Gaifman graph of $\HG$, while condition (3)
prescribes that, at each vertex $p$, all nodes in the $\chi$ labeling are covered by hyperedges in the $\lambda$ labeling. Indeed, the width of
the generalized hypertree decomposition is defined in terms of the number of hyperedges used to cover the nodes, rather than of the number of
such nodes, as in the width of $\tuple{T,\chi}$.

\begin{figure}[t]
 \centering
 \includegraphics[width=0.6\textwidth]{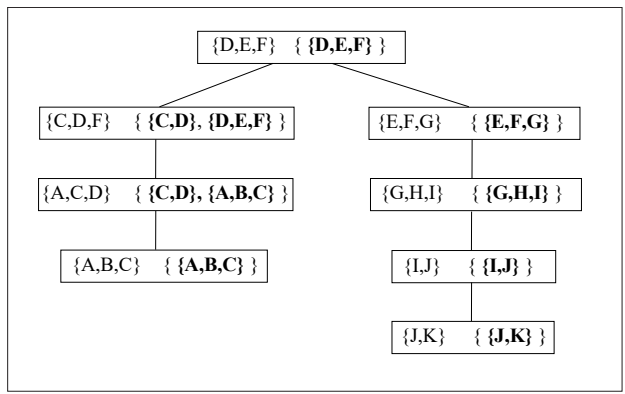}\vspace{-2mm}
 \caption{A (generalized) hypertree decomposition of the hypergraph $\HG_{Q_0}$.}\label{fig:sec2-bis}
\end{figure}

\begin{example}\em
Consider again the hypergraph $\HG_{Q_0}$ and the tree decomposition reported in Figure~\ref{fig:sec2}. The nodes occurring at each vertex of
the decomposition can be covered by using two hyperedges at most, as illustrated in the generalized hypertree decomposition depicted in
Figure~\ref{fig:sec2-bis}. Therefore, the width of this decomposition is $2$ and thus $ghw(\HG_{Q_0})\leq 2$.
Actually, $ghw(\HG_{Q_0})=2$ because $\HG_{Q_0}$ is a cyclic hypergraph, which entails $ghw(\HG_{Q_0})>1$. \hfill
$\lhd$
\end{example}

A \emph{hypertree decomposition}~\cite{gott-etal-99} of $\HG$ is a generalized hypertree decomposition $\HD=\tuple{T,\chi,\lambda}$ where: (4)
for each $p\in \vertices(T)$, $\nodes(\lambda(p)) \cap \chi(T_p) \;\subseteq\; \chi(p)$. Note that the inclusion in the above condition is
actually an equality, because Condition~(3) implies the reverse inclusion. The {\em hypertree width} $hw(\HG)$ of $\HG$ is the minimum width
over all its hypertree decompositions.
Let $k$ be any fixed natural number.
For any hypergraph $\HG$, deciding whether $hw(\HG)\leq k$ is feasible in polynomial time (and, actually, it is
highly-parallelizable)~\cite{gott-etal-99}, while deciding whether $ghw(\HG)\leq k$ is $\NP$-complete~\cite{GMS07}.

Therefore, condition (4) plays the technical role of guaranteeing that the hypertree decomposition is a tractable structural method.
Moreover, it cannot be much larger than its generalized variant, since $ghw(\HG)\leq hw(\HG)\leq 3\times ghw(\HG)+1$ holds~\cite{AGG07}.
%
As an example, the reader might check that $hw(\HG_{Q_0})=2$, too, because the generalized hypertree decomposition depicted in
Figure~\ref{fig:sec2-bis} satisfies condition~(4), and thus it is actually a hypertree decomposition. Later in the paper, Figure~\ref{fig:GMS}
shows a hypergraph where the generalized hypertree width is strictly smaller than the hypertree width.

\medskip \noindent\textbf{Tree Projections.} The framework of the {tree projections} is a mathematical framework that
encompasses all (purely) structural decomposition methods defined in the literature. Formally,
for two hypergraphs $\HG_1$ and $\HG_2$, we write $\HG_1\leq \HG_2$ if, and only if, each hyperedge of $\HG_1$ is contained in at least one
hyperedge of $\HG_2$. Let $\HG_1\leq \HG_2$; then, a \emph{tree projection} of $\HG_1$ with respect to $\HG_2$ is an acyclic hypergraph $\HG_a$
such that $\HG_1\leq \HG_a \leq \HG_2$. If such a hypergraph $\HG_a$ exists, then we say that the pair $(\HG_1,\HG_2)$ has a tree projection.
See Figure~\ref{fig:hypergraph} for an example.
Without loss of generality, we assume hereafter that $\HG_1$ and $\HG_2$ have the same set of nodes. Indeed, $\nodes(\HG_1) \not\subseteq
\nodes(\HG_2)$ trivially entails that there are no tree projections of $\HG_1$ with respect to $\HG_2$, while  $\nodes(\HG_2) \not\subseteq
\nodes(\HG_1)$ entails that there are useless nodes in $\HG_2$.

According to this unifying view, differences among the various (purely) structural decomposition methods just come in the way the resource
hypergraph $\HG_2$ is defined.
For instance, given a hypergraph $\HG$ and a natural number $k>0$, let $\HG^k$ denote the hypergraph over the same set of nodes as $\HG$, and
whose set of hyperedges is given by all possible unions of $k$ edges in $\HG$, i.e., $\edges(\HG^k)= \{ h_1 \cup h_2 \cup \cdots \cup h_k \mid
\{h_1,h_2,\ldots,h_k\}\subseteq \edges(\HG)\}$. Then, it is well known and easy to see that $ghw(\HG)\leq k$ if, and only if, there is a tree
projection for $(\HG,\HG^k)$.

Similarly, let $\HG^{tk}$ be the hypergraph over the same set of nodes as $\HG$, and whose set of hyperedges is given by all possible clusters
$B\subseteq\nodes(\HG)$ of nodes such that $|B| \leq k+1$. Then, $\HG$ has treewidth at most $k$ if, and only if, there is a tree projection
for $(\HG,\HG^{tk})$.

However, the notion of tree projection is more general then both treewidth and hypertree width, because the hyperedges of the resource
hypergraph may model arbitrary subproblems of the given instance whose solutions are easy to compute, or already available from previous
computations. For instance, in Example~\ref{ex:intro}, the resource hypergraph $\HG_{\V_0}$ does not correspond to any of the above mentioned
decomposition methods.

\medskip\noindent\textbf{Conjunctive Queries.} We leave the section by recalling conjunctive queries and their hypergraph-based representation,
 over which structural decomposition methods can be applied---as introduced in Example~\ref{ex:intro}.

A {\em conjunctive query} $Q$ consists of a finite conjunction of atoms of the form $r_1({\bf u_1})\wedge\cdots\wedge r_m({\bf u_m})$, where
$r_1,...,r_m$ (with $m > 0$) are relation symbols (not necessarily distinct), and ${\bf u_1},...,{\bf u_m}$ are lists of terms (i.e., variables
or constants).
The set of all atoms in $Q$ is denoted by $\atoms(Q)$. For a set of atoms $A$, $\vars(A)$ is the set of variables occurring in the atoms in
$A$. For short, $\vars(Q)$ denotes $\vars(\atoms(Q))$.

There is a very natural way to associate a hypergraph $\HG_\V=(N,H)$ with any set $\V$ of atoms: the set $N$ of nodes consists of all variables
occurring in $\V$; for each atom in $\V$, the set $H$ of hyperedges contains a hyperedge including all its variables; and no other hyperedge is
in $H$.
For a query $Q$, the hypergraph associated with $\atoms(Q)$ is briefly denoted by $\HG_Q$. If $\HG_Q$ is a connected hypergraph, we say that
$Q$ is a {\em connected} query.

\section{Greedy Strategies in Robber and Captain Games}\label{SECGAMES}

In this section, we define the concept of greedy strategies in the game-theoretic characterization of tree projections proposed in~\cite{GS08},
and we show that, unlike arbitrary strategies, greedy ones can be efficiently computed.

To formalize our results, we need to introduce some additional definitions and notations, which will be intensively used in the following.

Assume that a hypergraph $\HG$ is given. Let $V$, $W$, and $\{X,Y\}$ be sets of nodes. Then, $X$ is said \adj{V}\ (in $\HG$) to $Y$ if there
exists a hyperedge $h\in \edges(\HG)$ such that $\{X,Y\}\subseteq (h -V)$. A \spath{V}\ from $X$ to $Y$ is a sequence $X=X_0,\ldots,X_\ell=Y$
of nodes such that $X_{i}$ is \adj{V}\ to $X_{i+1}$, for each $i\in [0...\ell\mbox{-}1]$. We say that $X$ \touches{V} $Y$ if $X$ is
\adj{\emptyset} to $Z\in \nodes(\HG)$, and there is a \spath{V} from $Z$ to $Y$; similarly, $X$ \touches{V} the set $W$ if $X$ \touches{V} some
node $Y\in W$.
We say that $W$ is \connected{V}\ if $\forall X,Y\in W$ there is a \spath{V}\ from $X$ to $Y$. A \component{V} (of $\HG$) is a maximal
\connected{V}\ non-empty set of nodes $W\subseteq (\nodes(\HG)-V)$. For any \component{V}\ $C$, let $\edges(C) = \{ h\in \edges(\HG)\;|\; h\cap
C\neq\emptyset\}$, and for a set of hyperedges $H\subseteq \edges(\HG)$, let $\nodes(H)$ denote the set of nodes occurring in $H$, that is
$\nodes(H)=\bigcup_{h\in H} h$. For any component $C$ of $\HG$, we denote by $\F(C,\HG)$  the \emph{frontier} of $C$ (in $\HG$), i.e., the set
$\nodes(\edges(C))$.\footnote{The choice of the term ``frontier'' to name the union of a component with its outer border is due to the role
that this notion plays in the hypergraph game described in the subsequent section.}
Moreover, $\partial (C,\HG)$ denote the {\em border} of $C$ (in $\HG$), i.e., the set $\F(C,\HG)\setminus C$. Note that $C_1\subseteq C_2$
entails $\F(C_1,\HG)\subseteq \F(C_2,\HG)$.

In the following sections, given any pair of hypergraphs $(\HG_1,\HG_2)$ and a set of nodes $C\subseteq \HG_1$, we write for short $\F(C)$ and
$\partial C$ to denote $\F(C,\HG_1)$ and  $\partial (C,\HG_1)$, respectively.

\subsection{Game-Theoretic Characterization}\label{GTC}

The \emph{Robber and Captain} game is played on a pair of hypergraphs $(\HG_1,\HG_2)$ by a Robber and a Captain controlling some squads of
cops, in charge of the surveillance of a number of strategic targets. The Robber stands on a node and can run at great speed along the edges of
$\HG_1$. However, (s)he is not permitted to run trough a node that is controlled by a cop. Each move of the Captain involves one squad of cops,
which is encoded as a hyperedge $h\in \edges(\HG_2)$. The Captain may ask some cops in the squad $h$ to run in action, as long as they occupy
nodes that are currently reachable by the Robber, thereby blocking an escape path for the Robber. Thus, ``second-lines'' cops cannot be
activated by the Captain. Note that the Robber is fast and may see cops that are entering in action. Therefore, while cops move, the Robber may
run trough those positions that are left by cops or not yet occupied. The goal of the Captain is to place a cop on the node occupied by the
Robber, while the Robber tries to avoid her/his capture.

\begin{definition}\em  Let $\HG_1$ and $\HG_2$ be two hypergraphs.
The Robber and Captain game on $(\HG_1,\HG_2)$ is formalized as follows.
A \emph{position} for the Captain is a pair $(h,M)$ where $h$ is a hyperedge of $\HG_2$ and $M\subseteq h$.
A \emph{configuration} is a triple $(h,M,C)$, where $(h,M)$ is a position for the Captain, and $C$ is the \component{M} where the Robber
stands.\footnote{It is easy to see that in such games, being the robber arbitrarily fast, what matters is not the precise node where the robber
stands, but just the \component{M} where (s)he is free to move.} The initial configuration is $(\emptyset,\emptyset,\nodes(\HG_1))$.

A \emph{strategy} $\sigma$ is a function that encodes the \emph{moves} of the Captain. Its domain includes the initial configuration. For each
configuration $v_p=(h_p,M_p,C_p)$ in the domain of~$\sigma$, $\sigma(v_p)=(h_{r},M_{r})$, with $M_{r}\subseteq h_r\cap\F(C_p)$, is the novel
position for the Captain. After this move, the Robber can select any {\em \iecomponent{v_p,M_{r}}}, i.e., any \component{M_{r}} $C_{r}$ such
that $C_{p}\cup C_{r}$ is \connected{M_{p}\cap M_{r}}. If there is no \ecomponent{v_p,M_{r}}, then $(h_{r},M_{r},\emptyset)$ is said a
\emph{capture} configuration induced by $\sigma$. The move of the Captain is \emph{monotone} if, for each \ecomponent{v_p,M_{r}} $C_{r}$,
$C_{r}\subseteq C_p$. The domain of $\sigma$ includes the configuration $(h_{r},M_{r},C_{r})$, for each \ecomponent{v_p,M_{r}} $C_{r}$. No
other configuration is in the domain of $\sigma$.
The strategy $\sigma$ is monotone if it encodes only monotone moves over the configurations in its domain.

A strategy $\sigma$ can be represented as a directed graph $G(\sigma)=(N,A)$, called {\em strategy graph}, as follows. The set $N$ of nodes is
the set of all configurations in the domain of $\sigma$ plus all capture configurations induced by $\sigma$. If $v_p=(h_p,M_p,C_p)$ is a
configuration and $\sigma(v_p)=(h_{r},M_{r})$, then $A$ contains an arc from $v_p$ to $(h_{r},M_{r},C_{r})$ for each \ecomponent{v_p,M_{r}}
$C_{r}$, and to $(h_{r},M_{r},\emptyset)$ if there is no \ecomponent{v_p,M_{r}}.
We say that $\sigma$ is a winning strategy (for the Captain) if $G(\sigma)$ is acyclic. Otherwise, i.e., if $G(\sigma)$ contains a cycle, then
the Robber can avoid her/his capture forever. 
\end{definition}

\begin{figure}[t]
\centering
\includegraphics[width=0.98\textwidth]{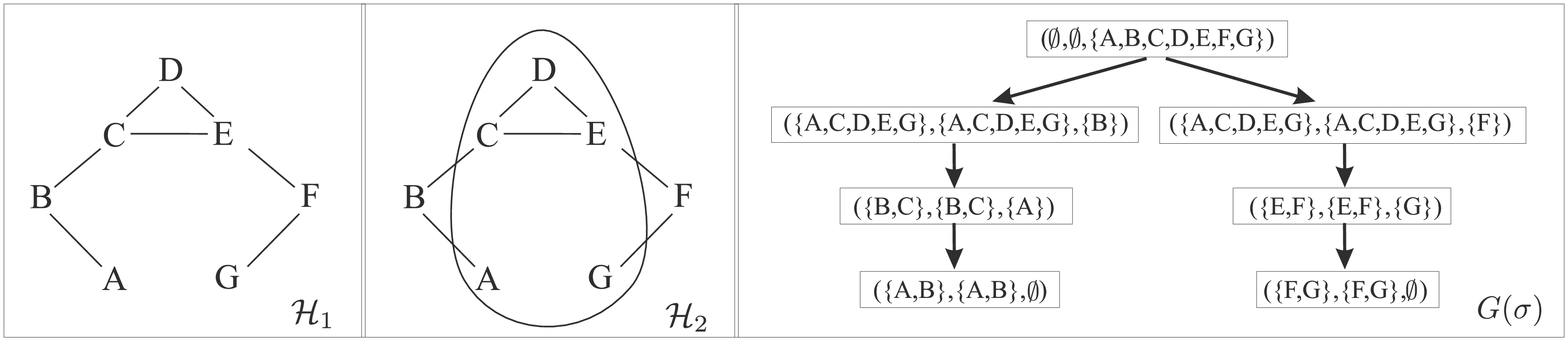}
\caption{The hypergraphs $\HG_1$ and $\HG_2$, plus the graph $G(\sigma)$ in Example~\ref{ex:construction}.}\label{fig:greedy-bis}
\end{figure}

\begin{example}\label{ex:construction}\em
Consider the two hypergraphs $\HG_1$ and $\HG_2$ reported in Figure~\ref{fig:greedy-bis}, together with the strategy graph $G(\sigma)$. The
graph encodes a winning strategy $\sigma$ for the Captain. From the initial configuration $(\emptyset,\emptyset,\nodes(\HG_1))$, the Captain
activates all the cops in the hyperedge $\{A,C,D,E,G\}$, so that the Robber has two available options, i.e., $\{B\}$ and $\{F\}$. In the former
(resp., latter) case, the Captain activates all the cops in the hyperedge $\{B,C\}$ (resp., $\{E,F\}$), so that the Robber has necessarily to
occupy the node $A$ (resp., $G$). Finally, the Captain activates the cops in $\{A,B\}$ (resp., $\{F,G\}$) and captures the Robber.
Note that the strategy $\sigma$ is non-monotone, because the Robber is allowed to return on $A$ and $G$, after that these nodes have been
previously occupied by the Captain in the first move. \hfill $\lhd$
\end{example}

In the above example, the hyperedge $\{A,C,D,E,G\}$ of $\HG_2$ ``absorbs'' the cycle in $\HG_1$, so that it is easily seen that there is a tree
projection $\HG_a$ of $\HG_1$ w.r.t.~$\HG_2$ (see Figure~\ref{fig:strategy-new-bis}). The fact that on this pair the Captain has a winning
strategy is not by chance.

\begin{figure}[t]
\centering
\includegraphics[width=0.99\textwidth]{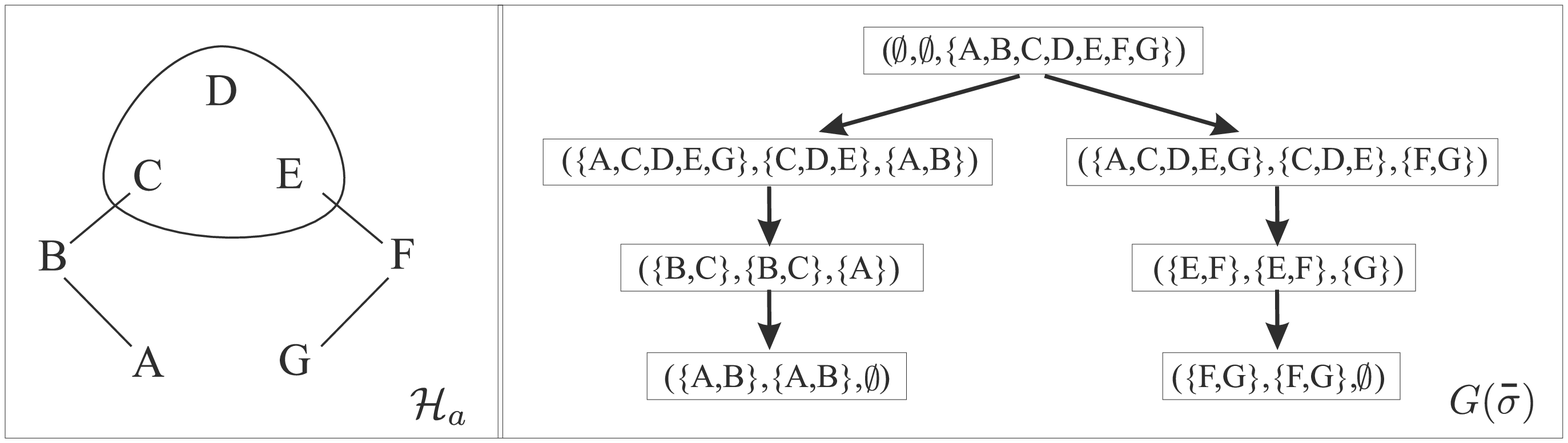}
\caption{A tree projection $\HG_a$ for the pair in Example~\ref{ex:construction}, plus the graph $G(\bar \sigma)$.}\label{fig:strategy-new-bis}
\end{figure}

\begin{theorem}[\cite{GS08}]
There is a tree projection of $\HG_1$ w.r.t.~$\HG_2$ if, and and only if, there is a winning strategy in the Robber and Captain game played on
$(\HG_1,\HG_2)$.
\end{theorem}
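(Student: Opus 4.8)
The plan is to prove the two implications separately; the forward one is routine and the converse carries essentially all the difficulty. For the forward direction, from a tree projection to a winning strategy, given an acyclic $\HG_a$ with $\HG_1\leq\HG_a\leq\HG_2$ I would fix a join tree $\JT$ of $\HG_a$ and root it arbitrarily, then let the Captain descend $\JT$ while maintaining the invariant that whenever the Robber sits in a component $C_p$, all of $C_p$ lies among the nodes occurring in the subtree rooted at the current vertex $p$. At $p$, whose associated hyperedge $a_p\in\edges(\HG_a)$ is contained in some squad $h_p\in\edges(\HG_2)$, the Captain activates the cops of $h_p$ on $a_p\cap\F(C_p)$. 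The connectedness condition of $\JT$ makes $a_p$ a separator of the subtrees hanging from $p$, so the Robber is pushed into exactly one child subtree and the new component is strictly contained in $C_p$; hence the move is monotone. Because $\HG_1\leq\HG_a$, every hyperedge of $\HG_1$ sits inside some $a_p$, so the Robber---who travels only along edges of $\HG_1$---can never slip past a covered separator, and as $\JT$ is finite the Robber is confined after finitely many descents to a single leaf hyperedge and captured, making the strategy graph acyclic.

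For the converse, from a winning strategy to a tree projection, let $\sigma$ be a winning strategy, so its strategy graph is acyclic. I would first reduce to monotone play: activating, at each move, every reachable cop of the chosen squad only shrinks the Robber's options, hence preserves winning, and exploiting this together with the acyclicity of the strategy graph one replaces $\sigma$ by a monotone winning strategy (this is the ``no incentive for non-monotonicity'' phenomenon). Under monotonicity the components $C_p$ reached along $\sigma$ are pairwise nested or disjoint, so unfolding the strategy graph yields a rooted tree $T$ whose parent--child relation mirrors the containment of components. I would then build $\HG_a$ by placing, at the vertices of $T$, hyperedges drawn from the squads played there: each such hyperedge is a subset of some $h_p\in\edges(\HG_2)$, giving $\HG_a\leq\HG_2$ at once, and for every query edge $e\in\edges(\HG_1)$ (which, since $\HG_1\leq\HG_2$, lies inside some squad) I would insert $e$ itself at the vertex where the Robber is first separated by $e$, securing $\HG_1\leq\HG_a$. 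Acyclicity of $\HG_a$ is then obtained by reading off a join tree from $T$: monotonicity guarantees that once a node leaves the Robber's region it never returns along the relevant branch, so the vertices of $T$ whose bag contains a fixed node form a contiguous (connected) subtree, which is exactly the connectedness condition.

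The main obstacle is the converse direction, and within it two coupled points. First is the reduction to monotone play: one must argue that eager activation cannot create a cycle in the strategy graph, i.e.\ that aggressiveness never helps the Robber, which is the technical heart of the ``no incentive'' result. Second is verifying, simultaneously, that the hyperedges chosen along $T$ both cover every $e\in\edges(\HG_1)$ and assemble into a genuine join tree; the connectedness condition is delicate because a single node may be guarded by several different squads at different depths, and only monotonicity forbids it from re-entering the Robber's space and thereby fragmenting its occurrence set in $T$. Once monotone play is in force, the covering inclusions $\HG_1\leq\HG_a\leq\HG_2$ and the join-tree property follow comparatively directly from the definitions of squad, frontier, and capture.
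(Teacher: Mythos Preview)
The paper does not give its own proof of this theorem: it is quoted verbatim as a result of \cite{GS08}, with no argument supplied here. The only related material the present paper does provide is the companion citation Theorem~\ref{thm:costruzione} (monotone $=$ non-monotone, and monotone strategies yield tree projections) and the technical Lemma~\ref{lem:construction}, both also imported from \cite{GS08}. So there is no in-paper proof to compare your proposal against; what follows is an assessment of your sketch on its own merits and against what the paper does import.

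Your forward direction is fine and is the standard join-tree descent. For the converse, your high-level plan---reduce to a monotone winning strategy, then read off $\HG_a$ from the strategy tree---is exactly the route taken in \cite{GS08} and alluded to here. Two points deserve tightening. First, your monotonicity reduction (``activating every reachable cop only shrinks the Robber's options, hence preserves winning'') is not the actual mechanism: greediness alone need not yield monotone play (Example~\ref{ex:greedy-strat} in this very paper shows non-monotone winning greedy strategies where no monotone greedy one exists). The reduction in \cite{GS08}, summarized here as Lemma~\ref{lem:construction}, instead iteratively deletes the \emph{escape door} $\E((M_r,C_r),M_s)$ from the cops' set at the offending step and argues that the modified strategy is still winning; monotonicity is obtained only after repeating this along a topological order. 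Second, your construction of $\HG_a$ is slightly off. One does not ``insert $e$ itself'' for each $e\in\edges(\HG_1)$; the hyperedges of $\HG_a$ are precisely the active-cop sets $M$ played along the monotone strategy (this is stated explicitly at the end of the proof of Theorem~\ref{thm:greedy}). The inclusion $\HG_1\le\HG_a$ then holds because a winning monotone strategy must eventually cover every edge of $\HG_1$ by some $M$---otherwise the Robber could shelter in that edge forever---not because edges are added by hand. With these two corrections your outline matches the intended argument.
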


Recall that the winning strategy in Example~\ref{ex:construction} is not monotone. However, an important property of this game is that there is
no incentive for the Captain to play a strategy that is not monotone.

\begin{theorem}[cf. \cite{GS08}]\label{thm:costruzione}
In the Robber and Captain game played on the pair $(\HG_1,\HG_2)$, a winning strategy exists if, and only if, a monotone winning strategy
exists.

Moreover, from any monotone winning strategy, a tree projection of $\HG_1$ w.r.t.~$\HG_2$ can be computed in polynomial time.
\end{theorem}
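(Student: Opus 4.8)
The statement has two parts: the claimed equivalence and the ``Moreover'' construction. The plan is to prove the construction first and then derive the equivalence from it. For the equivalence, the direction ``monotone winning $\Rightarrow$ winning'' is immediate, since a monotone winning strategy is in particular a winning strategy. For ``winning $\Rightarrow$ monotone winning'' I would not monotonize directly; instead I combine the characterization recalled above (a winning strategy exists iff a tree projection of $\HG_1$ w.r.t.\ $\HG_2$ exists) with the converse of the ``Moreover'' construction, namely: from a tree projection $\HG_a$ one obtains a monotone winning strategy by rooting a join tree $\JT$ of $\HG_a$ and letting the Captain, at each step, activate the nodes of the current bag that lie on $\F(C)$ for the robber's component $C$; the connectedness condition of $\JT$ guarantees that every robber option is confined to a strictly smaller subtree, so the strategy is monotone and winning. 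Hence winning $\Rightarrow$ tree projection $\Rightarrow$ monotone winning.

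For the ``Moreover'' construction, let $\sigma$ be a monotone winning strategy with strategy graph $G(\sigma)$. The first step is a structural observation: monotonicity forces the robber-components occurring in $\sigma$ to form a \emph{laminar} family. Indeed, two components on a common root-to-leaf branch are nested by monotonicity, whereas components on different branches descend from distinct robber options of a common ancestor and are therefore disjoint; hence any two components are nested or disjoint. Consequently, the distinct components, ordered by inclusion with root $\nodes(\HG_1)$, form a tree $T$ with $O(|\nodes(\HG_1)|)$ vertices, which can be read off $G(\sigma)$ in polynomial time; this is the device that lets me avoid the (potentially exponential) full unfolding of $G(\sigma)$. For each component $C$ appearing in $T$, I fix one configuration of $\sigma$ whose robber-component is $C$ and let $M_C \subseteq h_C$ be the cops activated there, where $h_C \in \edges(\HG_2)$ is the squad involved. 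The candidate tree projection $\HG_a$ is the hypergraph whose hyperedges are the sets $\{M_C\}$. Since $M_C \subseteq h_C$ for each $C$, we get $\HG_a \leq \HG_2$ at once.

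It remains to check $\HG_1 \leq \HG_a$ and that $\HG_a$ is acyclic. For covering, fix $e \in \edges(\HG_1)$: as $e$ is \connected{\emptyset}, for every $C$ the set $e \setminus M_C$ stays inside a single component, so following the robber's forced descent along $e$ the winning property drives the reachable component to the empty set and thereby forces $e \subseteq M_C$ at the capturing configuration; hence $e$ is contained in a hyperedge of $\HG_a$. For acyclicity I exhibit $T$ as a join tree of $\HG_a$, i.e.\ I verify that for each node $X$ the set $\{C : X \in M_C\}$ induces a connected subtree of $T$. Two facts drive this: (i) the border $\partial C$ is always contained in the cops resolving $C$ (a node adjacent to $C$ but outside it must be occupied, else it would belong to $C$), and (ii) by the rules of the game cops can only occupy nodes of the frontier $\F(C) = C \cup \partial C$ of the current component. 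Together with monotonicity, these imply that once the shrinking component has passed a node $X$, that node lies on no further frontier and can never be re-activated, while a \emph{first} activation of $X$ can occur only while $X$ is still inside the robber's region; using laminarity of the components that contain $X$, this pins $\{C : X \in M_C\}$ to a single connected portion of $T$. All bags are read directly off $\sigma$, so $\HG_a$ and its join tree are produced in polynomial time.

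The main obstacle is exactly this last connectedness (running-intersection) condition, for the subtle reason that monotone strategies are allowed to \emph{lift} a cop and later place it again, which a priori could scatter the occurrences of a node across $T$ and destroy acyclicity. The crux is therefore to turn the frontier-placement discipline together with the ``border-inside-the-squad'' property into a proof that re-placement is impossible once the component has moved inward past a node, and that fresh placements of any fixed node occur along a single chain of $T$. Establishing this cleanly, while simultaneously guaranteeing that each $\HG_1$-edge is captured by a \emph{single} squad and that $T$ stays of polynomial size, is the technical heart of the construction; the remaining verifications ($\HG_a \leq \HG_2$ and the polynomial-time bookkeeping) are routine.
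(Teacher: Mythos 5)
Your plan is viable, but it departs from the paper's route in both halves, and the hardest step is left unfinished rather than proved.

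For the equivalence, the paper (following \cite{GS08}, whose machinery it recalls in Lemma~3.3) monotonizes a winning strategy \emph{directly}: whenever a move has a nonempty escape door $\E((M_r,C_r),M_s)=\partial C_r\setminus M_s$, the transformation of Expression~(1) replaces the offending choice by $M_r'=M_r\setminus \E((M_r,C_r),M_s)$, and Lemma~3.3 guarantees that the new strategy is still winning and that the repaired move is monotone; iterating eliminates all non-monotone moves. You instead route through the earlier characterization: winning $\Rightarrow$ tree projection $\Rightarrow$ monotone winning, with the second implication given by the join-tree-guided strategy. This is logically valid if that characterization is taken as a black box, but be aware that its ``winning $\Rightarrow$ tree projection'' direction is itself established in \cite{GS08} essentially by the very monotonization you are bypassing, so the work is relocated into a citation rather than removed; moreover, the explicit escape-door machinery is precisely what the paper needs again in the proof of Theorem~4.1, which is why it is recalled here. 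For the ``Moreover'' part you arrive at the same hypergraph as the paper, namely $\edges(\HG_a)=\{M \mid \sigma(v)=(h,M)\}$, and your laminarity observation correctly bounds the number of distinct components by $O(|\nodes(\HG_1)|)$, which handles the polynomial-time claim.

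The genuine gap is the connectedness condition, which you name as ``the technical heart'' but do not establish, and your ingredient (i) is not the fact that closes it. That ingredient says $\partial C\subseteq M$ for the move that \emph{created} $C$; what you need is its forward counterpart: a move made \emph{from} a configuration with component $C$ is monotone if and only if its escape door is empty, i.e., if and only if the newly activated cop set again contains $\partial C$. From this one gets the interval property along each branch (once $X$ enters a bag it stays in every subsequent bag until it drops out of the frontier of the current component, after which, since $C'\subseteq C$ implies $\F(C')\subseteq \F(C)$, it can never reappear), and across branches the fact that $X\in M_C$ forces $X\in\F(D)$ for every ancestor $D$ of $C$, which together with the disjointness of sibling components rules out occurrences of $X$ in two subtrees not joined through their meet. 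You should also make precise what ``fix one configuration per component'' means when the same component is reached with different cop sets in different parts of the strategy graph; the safe choice is to unfold the strategy along the canonical play from the root, so that each component in the laminar tree receives a single well-defined move.
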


\begin{example}\label{ex:construction-bis}\em
Consider again the setting of Example~\ref{ex:construction}, and the strategy graph $G(\bar \sigma)$ shown in
Figure~\ref{fig:strategy-new-bis}. Note that the strategy $\bar \sigma$ is monotone, and in fact the moves of the Captain one-to-one correspond
to the hyperedges in the tree projection $\HG_a$. \hfill $\lhd$
\end{example}

The crucial properties to establish Theorem~\ref{thm:costruzione} are next recalled, as they will be useful in our subsequent analysis too. Let
$\sigma$ be a strategy, and let $v_p=(h_p,M_p,C_p)$ and $v_r=(h_r,M_r,C_r)$ be two configurations in its domain such that
$\sigma(v_p)=(h_r,M_r)$ and $C_r$ is a \ecomponent{v_p,M_{r}}. Let $\sigma(v_r)=(h_s,M_s)$ and define $\E((M_r,C_r),M_s)=M_r\cap \F(C_r)
\setminus M_{s}$ (which is equivalent to $\partial C_r\setminus M_{s}$ because $C_r$ is an \component{M_r}) as the \emph{escape-door} of the
Robber in $v_r$ when attacked with $M_s$. From~\cite{GS08}, a move is monotone if, and only if, such an escape door is empty; in particular,
$\sigma(v_r)$ is non-monotone if (and only if) $\E((M_r,C_r),M_s)\neq\emptyset$.
Let $M_r'=M_r\setminus \E((M_r,C_r),M_s)$, let $C_r'$ be the \component{M_r'} with $C_r\cup \E((M_r,C_r),M_s)\subseteq C_r'$, which exists
since $\E((M_r,C_r),M_s)\subseteq \F(C_r)$ and $M_r'\subseteq M_r$, and let $v_r'=(h_r,M_r',C_r')$. Finally, consider the following strategy
$\sigma'$:
\begin{equation}
\label{def:transformation}
\sigma'(h,M,C)= \left\{
\begin{array}{ll}

(h_r,M_r') & \mbox{ if }(h,M,C)=(h_p,M_p,C_p)\\
\sigma(h,M,C) & \mbox{ otherwise.}\\
\end{array}
\right.
\end{equation}

For such a state of the game, a number of technical properties have been proved in~\cite{GS08}. We summarize them in the following lemma.

\begin{lemma}[\cite{GS08}]\label{lem:construction}
The following properties hold:

\begin{enumerate}
\item[\em (1)] $\E((M'_r,C'_r),M_s)=\emptyset$.
\item[\em (2)] For each \iecomponent{v_p,M_r} $C$, either $C\subseteq C_r'$ or $C$ is a \iecomponent{v_p,M_r'}.
\item[\em (3)] For each \iecomponent{v_p,M_r'} $C'\neq C_r'$, $C'$ is a \iecomponent{v_p,M_r}.
\item[\em (4)] A set $C$ is a \iecomponent{v_r,M_s} if, and only if, it is a \iecomponent{v_r',M_s}.
\item[\em (5)] If $\sigma$ is a winning strategy, then $\sigma'$ is a winning strategy too.
\end{enumerate}
\end{lemma}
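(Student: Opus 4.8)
All five properties concern the single local operation of lowering the cop set from $M_r$ to $M'_r=M_r\setminus\E((M_r,C_r),M_s)$ at the configuration $v_p$, and I would organize the whole proof around two elementary facts about this operation. First, since $\E((M_r,C_r),M_s)=\partial C_r\setminus M_s$ and $M'_r=M_r\setminus\E((M_r,C_r),M_s)$, the escape door is disjoint from $M_s$; hence $M'_r\cap M_s=M_r\cap M_s$ and, moreover, $M_r\cap M_s\subseteq M'_r\subseteq M_r$. I will write $M^{*}$ for $M_r\cap M_s=M'_r\cap M_s$. Second, because $M'_r\subseteq M_r$, every \spath{M_r} is a \spath{M'_r}, so each \component{M_r} is contained in a unique \component{M'_r}; and since the only freed nodes $\E((M_r,C_r),M_s)$ lie in $\partial C_r$, they can only attach to $C_r$. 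Consequently $C'_r$ is exactly $C_r$ together with $\E((M_r,C_r),M_s)$ and those \components{M_r} reachable through it, while every \component{M_r} disjoint from the escape door survives unchanged as a \component{M'_r}.

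\textbf{Properties (2)--(4).} For (2), I would take a \iecomponent{v_p,M_r} $C$ and apply the previous paragraph: either $C$ meets the escape door and is swallowed into $C'_r$ (so $C\subseteq C'_r$), or it is disjoint from it and remains a \component{M'_r}; in the latter case the option condition is inherited because $M_p\cap M'_r\subseteq M_p\cap M_r$ turns every \spath{M_p\cap M_r} witnessing that $C_p\cup C$ is \connected{M_p\cap M_r} into a \spath{M_p\cap M'_r}. For (3), given a \iecomponent{v_p,M'_r} $C'\neq C'_r$, I would first observe $C'\cap\E((M_r,C_r),M_s)=\emptyset$ (escape-door nodes live in $C'_r$), so no \spath{M'_r} internal to $C'$ uses a freed node; such paths are therefore \spath{M_r}s and $C'$ is a \component{M_r}. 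The only nontrivial point is re-establishing that $C_p\cup C'$ is \connected{M_p\cap M_r} from its being \connected{M_p\cap M'_r}, i.e.\ upgrading to the larger cop set; here I would reroute a connecting path to avoid $M_p\cap\E((M_r,C_r),M_s)$, which is possible precisely because those nodes are interior to $C'_r$ and disjoint from $C'$. Property (4) is the cleanest: the \components{M_s} do not depend on $C_r$ versus $C'_r$, and since the relevant cop set is $M^{*}$ in both cases and $M^{*}\subseteq M'_r$, the set $C'_r$ is itself \connected{M^{*}} to $C_r$; thus $C_r\cup C$ is \connected{M^{*}} if and only if $C'_r\cup C$ is, which is the stated equivalence.

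\textbf{Property (1): the crux.} I expect $\E((M'_r,C'_r),M_s)=\emptyset$, equivalently $\partial C'_r\subseteq M_s$, to be the main obstacle. The plan is to show every $z\in\partial C'_r$ lies in $M_s$. Fix such a $z$; then $z\in M'_r$ and $z\notin C'_r$. Suppose $z\notin M_s$. Then $z\notin\partial C_r$, for otherwise $z\in\partial C_r\setminus M_s=\E((M_r,C_r),M_s)\subseteq C'_r$, a contradiction; hence the hyperedge witnessing $z\in\partial C'_r$ must meet $C'_r\setminus C_r$ at some node $w$, and a short argument gives $w\notin M_s$. Since $z,w\notin M_s$ share a hyperedge, $z$ is \adj{M_s} to $w$ and so lies in the same \component{M_s} as $w$. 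The delicate step, which I expect to require the most care, is to convert this into $z\in C'_r$, contradicting $z\in M'_r$. This hinges on the gap between reachability avoiding $M^{*}$ (which governs the robber's one-move options and places $w$ in the robber's reachable region) and reachability avoiding $M'_r$ (which defines $C'_r$), the difference being exactly the nodes of $M'_r\setminus M_s$. I would close the gap by following the witnessing \spath{M_s} from $w$ to $z$, locating the first node that leaves $C'_r$, and showing its existence contradicts either the maximality of $C'_r$ as a \component{M'_r} or the inclusion $\E((M_r,C_r),M_s)\subseteq C'_r$.

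\textbf{Property (5).} Assuming $\sigma$ winning, i.e.\ $G(\sigma)$ acyclic, I would prove $G(\sigma')$ acyclic by viewing it as surgery on $G(\sigma)$ at the single configuration $v_p$. By (2) and (3) the children of $v_p$ under $\sigma'$ are $C'_r$ together with exactly the surviving \ecomponents{v_p,M_r}, whose onward play is copied verbatim from $\sigma$; by (1) the new move is monotone; and by (4) the subtree hanging below $v'_r=(h_r,M'_r,C'_r)$ is in arc-preserving bijection with the one below $v_r$, since the successors of $v'_r$ under $M_s$ are the very same \components{M_s} that succeed $v_r$ in $G(\sigma)$. I would then conclude acyclicity by a well-founded ranking argument: assign to each configuration its longest distance to a capture in $G(\sigma)$ and check that this rank strictly decreases along every arc of $G(\sigma')$, the one modified arc $v_p\to v'_r$ respecting the order because the successors of $v'_r$ already occur, with strictly smaller rank, in $G(\sigma)$. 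The point needing attention is that enlarging $C_r$ to $C'_r$ creates no back-arc to an ancestor; this follows because $C'_r$ replaces the move out of $v_p$ rather than being inserted below it, so no configuration reachable from $v'_r$ can coincide with an ancestor of $v_p$.
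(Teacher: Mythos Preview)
The paper does not prove this lemma at all: it is imported verbatim from \cite{GS08} with the sentence ``a number of technical properties have been proved in~\cite{GS08}. We summarize them in the following lemma,'' and no argument is supplied. So there is no in-paper proof against which to compare your plan.

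That said, your plan for (1) has a real gap, and it is worth naming because it reveals a hidden hypothesis. As literally stated---with only the game constraint $M_r\subseteq h_r\cap\F(C_p)$---property~(1) can fail. Take $\HG_1$ to be the path on $\{a,b,c,d\}$ with edges $\{a,b\},\{b,c\},\{c,d\}$; let $v_p$ be the initial configuration, $M_r=\{b,c\}$, $C_r=\{a\}$, and $M_s=\{a\}$. Then $\E((M_r,C_r),M_s)=\{b\}$, $M'_r=\{c\}$, $C'_r=\{a,b\}$, and $\partial C'_r=\{c\}\not\subseteq M_s$, so $\E((M'_r,C'_r),M_s)=\{c\}\neq\emptyset$. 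In this instance your ``delicate step'' cannot succeed: the first node of your \spath{M_s} that leaves $C'_r$ is $z=c$ itself, and its presence contradicts neither the maximality of $C'_r$ (because $c\in M'_r$) nor the inclusion $\E((M_r,C_r),M_s)\subseteq C'_r$.

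What actually makes (1) go through in the paper's sole application (the proof of Theorem~\ref{thm:greedy}) is that the strategy there is \emph{nice}, so one may assume $M_r=\partial C_r$ at the configuration being transformed. Under that hypothesis $M'_r=\partial C_r\setminus(\partial C_r\setminus M_s)=\partial C_r\cap M_s\subseteq M_s$, whence $\partial C'_r\subseteq M'_r\subseteq M_s$ and (1) is a one-line observation rather than a ``crux.'' The formulation in \cite{GS08} presumably carries this or an equivalent assumption. Your sketches for (2)--(5) are along sound lines; the rerouting step in (3) also becomes straightforward once $M_r=\partial C_r$ is available, since then $M_p\cap\E((M_r,C_r),M_s)\subseteq M_p\cap M_r$ controls exactly the nodes you need to avoid.
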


\subsection{Greedy Strategies}

Since winning strategies correspond to tree projections, there is no efficient algorithm for their computation. Indeed, just recall that
deciding the existence of a tree projection is not feasible in polynomial time, unless $\Pol=\NP$~\cite{GMS07}.

Our goal is then to focus on certain ``greedy'' strategies that are easy to compute. Intuitively, in greedy strategies it is required that
\emph{all} cops available at the current squad $h_p$ and reachable by the Robber enter in action. If all of them are in action, then a new
squad $h_r$ is selected, again requiring that all the active cops, i.e., those in the frontier, enter in action.

\begin{definition}\label{def:greedy}\em
On the Robber and Captain game played on $(\HG_1,\HG_2)$, a strategy $\sigma$ is {\em greedy} if, for any configuration $v_p=(h_p,M_p,C_p)$ in
the domain of $\sigma$, the next position $\sigma(v_p)=(h_{r},M_{r})$ is such that $M_{r}=h_{r}\cap \F(C_p)$, where $h_{r}=h_p$ if $h_p\cap
C_p\neq\emptyset$, and $h_{r}$ is any squad in $\edges(\HG_2)$ if $h_p\cap C_p=\emptyset$. 
\end{definition}

Given such a greedy way to select cops at each step, observe that the former case ($h_p\cap C_p\neq\emptyset$) may only occur if the Robber is
able to come back to some position previously controlled by the Captain. Greedy winning strategies are indeed non-monotone in general, and for
some pair of hypergraphs it is possible that there is no monotone winning greedy strategy, although  monotone winning strategies (non-greedy)
exist.

\begin{example}\label{ex:greedy-strat}\em
Consider again the hypergraphs $\HG_1$ and $\HG_2$ shown in Figure~\ref{fig:greedy-bis}, and recall that the strategy graph of a monotone
winning strategy $\bar \sigma$ is depicted in Figure~\ref{fig:strategy-new-bis}. However, there is no monotone greedy strategy in this case.
Indeed, if at the beginning of the game the Captain asks the squad $\{A,C,D,E,G\}$ to enter in action and the Robber goes on $B$, then in the
next move the Robber is forced to lose the control on $A$ in order to move on $\{C,B\}$ and eventually win via $\{B,A\}$---see again
Figure~\ref{fig:greedy-bis}. On the other hand, if the attack of the Captain starts on either side, say on the left branch, the Captain has
then to attack the component that includes the triangle and the other branch. At this point, the only available greedy choice is use the big
squad and hence to employ cops $\{C,D,E,G\}$. However, as in the previous case, $G$ will be later (necessarily) left free to the Robber, in
order to win the game. \hfill $\lhd$
\end{example}

We now show that, differently from arbitrary strategies, the existence of greedy winning strategies can be decided in polynomial time. To
establish the result, a useful technical property is that greedy strategies can only involve a polynomial number of configurations. Let us
denote by $\texttt{MaxGreedyStrat}(\HG_1,\HG_2)$ the maximum domain cardinality over any greedy strategy in the Robber and Captain game on a
pair $(\HG_1,\HG_2)$.

\begin{lemma}\label{lem:maxconf}
Let $(\HG_1,\HG_2)$ be a pair of hypergraphs. Then, $\texttt{MaxGreedyStrat}(\HG_1,\HG_2)$ is at most $|\edges(\HG_2)|\times|\nodes(\HG_1)|
(|\edges(\HG_2)|\times|\nodes(\HG_1)|+1) +1$.
\end{lemma}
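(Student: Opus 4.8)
The plan is to bound the number of distinct configurations in the domain of an arbitrary greedy strategy $\sigma$ by a product of two polynomial quantities: the number of distinct Captain positions $(h,M)$ that can occur, and the number of robber components compatible with a fixed position. Throughout I write $n=|\nodes(\HG_1)|$ and $m=|\edges(\HG_2)|$, so that the target bound reads $mn(mn+1)+1$.

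The first, easy, ingredient is that in a greedy strategy the position is determined by the squad together with the \emph{parent} component. Indeed, by Definition~\ref{def:greedy}, whenever $\sigma$ moves from a configuration $v_p=(h_p,M_p,C_p)$ to a position $\sigma(v_p)=(h_r,M_r)$, we have $M_r=h_r\cap\F(C_p)$; hence $(h_r,M_r)$ is fixed once we know the hyperedge $h_r\in\edges(\HG_2)$ and the frontier $\F(C_p)$ of the attacked component. Moreover, for any fixed position $(h,M)$ all the configurations $(h,M,C)$ in the domain share the same $M$, and their third components $C$ are \components{M}, hence pairwise disjoint non-empty subsets of $\nodes(\HG_1)\setminus M$. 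Consequently there are at most $n$ configurations per position, so the number of non-initial configurations is at most $n$ times the number of distinct positions, which in turn is at most $m\cdot R$, where $R$ is the number of distinct components that may occur as the robber's position in $\sigma$ (the factor $m$ accounting for the free choice of squad). Adding $1$ for the initial configuration, it suffices to prove $R\le mn+1$.

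To bound $R$, the key structural fact I would extract from greediness is that every robber component other than $\nodes(\HG_1)$ has its \emph{border} confined to a single squad of $\HG_2$. Precisely, if $C_r$ is an \iecomponent{v_p,M_r} of a greedy move, then $\partial C_r\subseteq M_r\subseteq h_r$: any node adjacent to $C_r$ from outside must be blocked, since otherwise it would belong to the same \component{M_r} as $C_r$, contradicting maximality; and $M_r\subseteq h_r\in\edges(\HG_2)$ by the greedy rule. A companion observation is that blockers lying outside $\F(C_r)$ are irrelevant to the connectivity of $C_r$, so $C_r$ is exactly the \component{\partial C_r} of any of its vertices. I would then aim to charge each such component to a pair $(h,v)$, where $h$ is a squad containing $\partial C_r$ and $v$ is a vertex of $C_r$, so that distinct reachable components inject into incidences between squads and nodes; since there are $mn$ such incidences, this would give $R\le mn+1$, whence the number of configurations is at most $1+\bigl(m(mn+1)\bigr)\cdot n=mn(mn+1)+1$, as required.

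The step I expect to be the real obstacle is the injectivity of this charging, i.e.\ genuinely establishing $R\le mn+1$. The border-containment fact alone does not suffice: for a \emph{fixed} squad $h$, the family of connected sets whose border lies in $h$ is not laminar in general (two such sets may properly cross), and such sets need not be separated by a single representative vertex, so a naive map $C_r\mapsto(h,\min C_r)$ can collapse nested components. The difficulty is precisely the non-monotonicity that motivates greedy strategies: along a chain of moves that re-use the same squad (triggered whenever $h_p\cap C_p\neq\emptyset$), the robber's region need not shrink, so nesting cannot be claimed for free. My plan would therefore be to exploit the greedy \emph{dynamics}, rather than the static border property alone, to show that the components actually reachable in $\sigma$ do admit a faithful encoding into squad/node incidences (for instance by tracking, along each re-use chain, a node that is newly freed and attributing each reachable component to a distinct such node within its squad), thereby closing the count at $R\le mn+1$ and hence the lemma.
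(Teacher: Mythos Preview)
Your reduction in the first two paragraphs is sound: greediness gives $M_r=h_r\cap\F(C_p)$, so every Captain position is determined by the pair (squad, parent component), and each position admits at most $n$ robber options. The gap is exactly where you place it, namely the bound $R\le mn+1$ on the number of distinct robber components. Your static fact $\partial C_r\subseteq h_r$ (equivalently, $C_r$ is the \component{\partial C_r} of any of its points) is correct but too weak to make a map $C_r\mapsto(h,\text{vertex})$ injective, as you yourself note; and your sketched ``dynamic'' fix via re-use chains does not close this, because the only re-use the greedy rule ever forces is a \emph{single} step (after which $h\cap C=\emptyset$ again), so there is no long chain along which freed nodes can be harvested one per component.

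The paper avoids bounding $R$ altogether and instead encodes configurations directly, via a dichotomy you have not isolated. If $h_p\cap C_p=\emptyset$, then $\partial C_p\subseteq M_p\subseteq h_p$ together with $C_p\cap h_p=\emptyset$ upgrade your border observation to the stronger statement that $C_p$ is an \component{h_p} (not merely a connected set with border inside $h_p$); hence such a parent is pinned down by a pair $(h_p,X_p)$. If instead $h_p\cap C_p\neq\emptyset$, the greedy rule forces $h_r=h_p$ and $M_r\supseteq M_p$, and one checks that the resulting $C_r$ is already an \component{h_p}; in particular this ``unsettled'' situation cannot repeat. Every non-initial configuration is then identified either by a quadruple $(h_r,h_p,X_p,X_r)$ when the parent is settled (recover $C_p$ as the \component{h_p} of $X_p$, then $M_r=h_r\cap\F(C_p)$, then $C_r$ as the \component{M_r} of $X_r$), or by a pair $(h_p,X_r)$ when the parent is unsettled, giving at most $m^2n^2+mn=mn(mn+1)$ non-initial configurations. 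The ``transient'' components arising when $h\cap C\neq\emptyset$ are exactly the ones that are \emph{not} \components{h} for their own squad, which is why your attempt to squeeze all of $R$ into $mn$ squad/vertex incidences stalls.
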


\begin{proof}
Let $\sigma$ be a greedy strategy, and let $v_p=(h_p,M_p,C_p)$ be a configuration in its domain. Note that the only configuration where
$h_p=M_p=\emptyset$ is the starting configuration $(\emptyset,\emptyset,\nodes(\HG_1))$, which is taken into account by the final ``$+1$'' in
the statement. Therefore, we next assume $M_p\neq\emptyset$.
In order to establish the result, we shall derive an upper bound on the number of possible distinct configurations $(h_r,M_{r},C_{r})$ following
$(h_p,M_p,C_p)$ in the game where the Captain plays according to $\sigma$. In particular, we shall distinguish two cases, based on whether
$h_p\cap C_p$ is empty or not. For each of these two scenarios, we shall bound the number of such configurations $(h_r,M_{r},C_{r})$ according
to the constraints imposed on them by the definition of greedy strategy (cf. Definition~\ref{def:greedy}).

Consider the case where $h_p\cap C_p=\emptyset$. In this case, a new squad $h_{r}\in\edges(\HG_2)$ is chosen by the Captain according to
$\sigma$. Since $C_p$ is an \component{M_p} and thus $\partial C_p\subseteq M_p\subseteq h_p$, we get that this case occurs only if $C_p$ is
actually an \component{h_p}, too. Such a component is uniquely identified by any pair of the form $(h_p,X_p)$ such that $X_p\in \nodes(\HG_1)$
is a representative of the component (e.g., the node in $C_p$ having the smallest position according to any fixed ordering over the nodes).
It follows that the new set of cops $M_{r}=h_{r}\cap \F(C_i)$ is uniquely determined by $h_{r}$ and $C_p$ and thus may be identified through a
triple $(h_{r},h_p,X_p)$. Thus, the maximum number of such sets $M_{r}$ of cops is $|\edges(\HG_2)|^2\times |\nodes(\HG_1)|$. Moreover, the
possible configurations $(h_{r},M_{r},C_{r})$ following $(h_p,M_p,C_p)$ in the game where the Captain plays according to $\sigma$ are
identified by quadruples of the form $(h_{r},h_p,X_p,X_{r})$, where $h_{r}$ is used both to identify itself and to determine the set $M_{r}$
together with $h_p$ and $X_p$, and where $X_{r}$ is a representative of the \component{M_{r}}.
In fact, if there is no \ecomponent{v_p,M_p}, then $X_{r}$ is a distinguished element not in $\nodes(\HG_1)$ (or some element in $M_p$ occupied
by some cop) meaning that the only configuration following $(h_p,M_p,C_p)$ is $(h_{r},M_{r},\emptyset)$ where the Robber is captured.
Overall, the maximum number of such configurations is $|\edges(\HG_2)|^2 \times |\nodes(\HG_1)|^2$.

Finally, consider the case where $h_p\cap C_p\neq\emptyset$. In this case, $M_{r}=h_p\cap \F(C_p)$. Since $C_p$ is an \component{M_p},
$\partial C_p\subseteq M_p\subseteq h_p$. It follows that the new nodes from $\F(C_p)$ to be included in $M_{r}$ belong to $C_p$, that is, we
may also write $M_{r}=M_p\cup (h_p\cap C_p)$. Note that no configuration of the game following this one can be of this type. Indeed, every
\component{M_{r}} $C_{r}$ where the Robber may go from $C_p$ will be a subset of $C_p$ (because $\partial C_p\subseteq M_p\subseteq
M_{r}\subseteq h_p$), and will have intersections with $h_p$. As a further consequence, such a $C_{r}$ must be an \component{h_p}. By
contradiction, if there is some node $X_p\in C_{r}\subseteq C_p$ that is \connected{M_{r}} to some $X_{r}$ in $h_p\setminus M_{r}$, then $X_p$
is also \connected{M_p} to $X_{r}$. However, this is impossible because $X_p$ is also in $C_p$ and hence $X_{r}$ would be in $C_p$, too, and
hence in $h_p\cap C_p$ and in $M_{r}$, by construction. Therefore, the possible configurations $(h_p,M_{r},C_{r})$ following $(h_p,M_p,C_p)$ in
the game where the Captain plays according to $\sigma$ are identified by pairs of the form $(h_p,X_p)$, where $X_p\in \nodes(\HG_1)$ is the
representative of the \component{h_p} $C_{r}$ (and where $M_{r}$ is computed from them).
As above, if there is no \ecomponent{v_p,M_p}, then $X_p$ is a distinguished element witnessing that the configuration is a capture
configuration of the form $(h_p,M_{r},\emptyset)$.
Overall, the maximum number of such configurations is $|\edges(\HG_2)|\times |\nodes(\HG_1)|$.~\hfill~$\Box$
\end{proof}

\begin{figure}[t]
\vspace{2mm}
\centering
\fbox{\parbox{0.9\textwidth}{\small
\vspace{-1mm}\begin{itemize}
\item[]\hspace{-3mm}\emph{Boolean function} \textsc{GreedyWinningStrategy}$(h_p,M_p,C_p,i)$;
\vspace{0mm}\item[] 
/$\ast$  $(h_p,M_p,C_p)$ is an extended configuration over $(\HG_1,\HG_2)$,\\ \ \ \ \ \ $\hspace{10mm} i\geq 0$ is a natural number $\ast$/
\
\\ \vspace{-2mm}\hrule\ \\

\vspace{-4mm}
\item[1)] \textbf{if} $i>\texttt{MaxGreedyStrat}(\HG_1,\HG_2)$, then \textbf{return} \textsc{False};
\item[2)] \textbf{if} $h_p \cap C_p \neq \emptyset$, then  \textbf{let} $h_{r}=h_p$;
\item[] \textbf{else}  \textbf{guess} a hyperedge $h_{r}\in\edges(\HG_2)$;
\item[3)] \textbf{let} $M_{r}= h_{r} \cap \F(C_p) $;
\item[4)] \textbf{for each} \ecomponent{(h_p,M_p,C_p),M_{r}} $C_{r}$ \textbf{do}
\item[] \quad \ \textbf{if} \emph{not} \textsc{GreedyWinningStrategy}$(h_{r},M_{r},C_{r},i+1)$, then \textbf{return} \textsc{False};
\item[5)] \textbf{return} \textsc{True};

\vspace{-1mm}\end{itemize}}} \caption{\small \textsc{GreedyWinningStrategy}.}\label{fig:Algoritmo1}
\end{figure}

To see that the existence of a winning greedy strategy is decidable in polynomial time, consider the \textsc{GreedyWinningStrategy} algorithm
illustrated in Figure~\ref{fig:Algoritmo1}, which receives as input a configuration $(h_p,M_p,C_p)$ for the Robber and Captain game, plus a
``level'' $i$. Note that this algorithm is a high-level specification of an alternating Turing machine, say $\mathcal{M}_G$~\cite{john-90}.

After the first step, where we check that the number of recursive calls has not exceeded the number of all distinct configurations, the
algorithm suddenly evidences its \emph{non-deterministic} nature. Indeed, it guesses a hyperedge $h_{r}$ corresponding to the next move of the
Captain (existential step of $\mathcal{M}_G$). Eventually, it returns \textsc{True} if, and only if, the recursive calls
\textsc{GreedyWinningStrategy}$(h_{r},M_{r},C_{r},i+1)$ with $M_{r}= h_{r} \cap \F(C_p)$ succeed on each \ecomponent{(h_p,M_p,C_p),M_{r}}
(universal step of $\mathcal{M}_G$).

\begin{theorem}\label{thm:GreedyExistence}
Deciding the existence of a greedy winning strategy in the Robber and Captain game is feasible in polynomial time.
\end{theorem}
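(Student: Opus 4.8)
The plan is to establish the claim through the alternating Turing machine $\mathcal{M}_G$ specified by the \textsc{GreedyWinningStrategy} algorithm of Figure~\ref{fig:Algoritmo1}, by showing two things: (i) that $\mathcal{M}_G$, invoked on the initial configuration $(\emptyset,\emptyset,\nodes(\HG_1))$ with $i=0$, accepts if and only if a greedy winning strategy for the Captain exists; and (ii) that $\mathcal{M}_G$ can be implemented within logarithmic work space, so that, by the Chandra--Kozen--Stockmeyer characterization $\mathrm{ASPACE}(\log n)=\Pol$, the decision problem lies in $\Pol$ and is thus feasible in deterministic polynomial time.

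For correctness, I would first observe that the existential guess of $h_r$ at step~2 faithfully models the Captain's freedom in a greedy move: when $h_p\cap C_p\neq\emptyset$ the squad is forced to $h_r=h_p$, and in either case step~3 sets $M_r=h_r\cap\F(C_p)$, so that the cops entering in action are exactly those prescribed by Definition~\ref{def:greedy}. The universal branching at step~4 ranges over the Robber's replies, i.e.\ over the \ecomponents{v_p,M_r}, with a capture (no available option) serving as an immediate accepting leaf. An accepting computation of $\mathcal{M}_G$ thus determines, for each reachable configuration, a move under which the Captain still wins; fixing one such move per configuration yields a positional greedy strategy whose strategy graph is acyclic. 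The delicate point is the cutoff at step~1: I would argue, via Lemma~\ref{lem:maxconf}, that whenever a play has length exceeding $\texttt{MaxGreedyStrat}(\HG_1,\HG_2)$ some configuration must repeat along it, which is precisely a directed cycle in the strategy graph and hence a line along which the Robber escapes forever, so rejecting such overlong plays is sound. Conversely, if a greedy winning strategy exists, then $G(\sigma)$ is acyclic over the (by Lemma~\ref{lem:maxconf}) polynomially many distinct configurations, so every play reaches a capture within $\texttt{MaxGreedyStrat}(\HG_1,\HG_2)$ steps and the corresponding accepting subtree survives the cutoff. Combining the two directions by induction over the well-founded (hence finite) structure of $G(\sigma)$ yields the desired equivalence.

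For the complexity bound, the key leverage is again Lemma~\ref{lem:maxconf}: its proof exhibits that every greedy configuration is identified by a bounded tuple of hyperedge- and node-indices (such as $(h_r,h_p,X_p,X_r)$ or $(h_p,X_p)$), so a configuration, together with the counter $i$ (which ranges only up to the polynomial value $\texttt{MaxGreedyStrat}(\HG_1,\HG_2)$), can be stored in $O(\log(|\edges(\HG_2)|+|\nodes(\HG_1)|))$ space. An alternating machine requires no recursion stack: each alternation branch proceeds carrying only the current configuration and the counter. The remaining subroutines---computing $\F(C_p)$ and $M_r$, and enumerating and testing the \ecomponents{v_p,M_r}---reduce to \spath{V} reachability and \component{V} membership checks, which are $\NL$ computations and are therefore absorbed by the alternating logarithmic-space budget, since $\NL\subseteq\mathrm{ASPACE}(\log n)$. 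Hence $\mathcal{M}_G$ runs in alternating logarithmic space, and the result follows.

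I expect the main obstacle to be the correctness of the cutoff, namely pinning down that a repeated configuration along a single play is tantamount to a cycle in $G(\sigma)$ and that bounded acceptance depth coincides with acyclicity (well-foundedness) of the strategy graph; the encoding and complexity part is comparatively routine once the logarithmic representation of configurations extracted from Lemma~\ref{lem:maxconf} is in place. As an alternative to the alternating-machine route, I note that the same conclusion follows by a direct least-fixed-point (dynamic-programming) computation of the set of configurations from which the Captain can force a capture, evaluated over the polynomially many configurations of Lemma~\ref{lem:maxconf}; this likewise yields a $\Pol$ algorithm and avoids explicitly invoking $\mathrm{ASPACE}(\log n)=\Pol$.
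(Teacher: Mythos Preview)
Your proposal is correct and follows essentially the same approach as the paper: both argue correctness of the \textsc{GreedyWinningStrategy} alternating machine via the polynomial configuration bound of Lemma~\ref{lem:maxconf} (justifying the depth cutoff at step~1), and both derive the complexity bound by encoding configurations with $O(1)$ many hyperedge/node indices so that $\mathcal{M}_G$ runs in alternating logspace, invoking $\mathrm{ASPACE}(\log n)=\Pol$. Your remark on the alternative fixed-point/dynamic-programming route also mirrors the discussion the paper gives immediately after the proof when deriving Corollary~\ref{cor:computeGreedyStrat}.
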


\begin{proof}
Let $(\HG_1,\HG_2)$ be a pair of hypergraphs, and consider the execution of the Boolean function $\textsc{GreedyWinningStrategy}$ on input the
starting $(\emptyset, \emptyset, \nodes(\HG_1),0)$.
In general, the function receives as its input a quadruple $(h_p,M_p,C_p,i)$, where $(h_p,M_p,C_p)$ is a configuration for a greedy strategy and
where $i$ counts the number of recursive invocations, i.e., the recursion level of the current invocation.
For the moment, let us get rid of step (1). Then, each invocation of the algorithm, based on the current configuration $(h_p,M_p,C_p)$,
computes the next position $(h_r,M_r)$ of the Captain. In particular, by Definition~\ref{def:greedy}, the position is completely determined by
the hyperedge $h_r$, which is ``guessed'' by the algorithm at step (2)---hence, the algorithm is non-deterministic. Given $h_r$, step (3) is
responsible for computing $M_r$ according to Definition~\ref{def:greedy}. Then, for all options $C_r$ that are available to the Robber after the
move $(h_r,M_r)$, the algorithm checks recursively at step (4) whether there is a winning strategy for the Captain. The algorithm returns
\textsc{True} if, and only if, all such recursive calls succeed.
We next show that the algorithm is correct and that it can be implemented to take polynomial time.

Concerning the correctness, due to its non-deterministic nature, it is easily seen that, by getting rid of step {(1)}, it returns \textsc{True}
if, and only if, the Captain has a greedy winning strategy in the game played on $(\HG_1,\HG_2)$ (which we assume to be ``visible'' by the
function at a every call, to avoid a longer signature).
Moreover, we claim that the check performed at step~{(1)} cannot lead to a wrong \textsc{False} output. Indeed, just observe that the number of
recursive calls is bounded by the number of all distinct configurations, which is $\texttt{MaxGreedyStrat}(\HG_1,\HG_2)$ at most, by
Lemma~\ref{lem:maxconf}. Therefore, if the recursion level $i$ exceeds this threshold, then we can safely answer \textsc{False}.

Let us now focus on the running time. We know that $\textsc{GreedyWinningStrategy}$ can be implemented on an alternating Turing machine
$\mathcal{M}_G$, whose existential steps correspond to the guess statements at step~2, while universal steps are used for checking that the
conditions at step~4 are satisfied by all the relevant components. In addition, by indexing the various data structures and by referring each
component via one point contained in it (selected through any fixed criterium), the machine can be implemented to use logarithmic many bits on
its worktape.

For instance, recall from the proof of Lemma~\ref{lem:maxconf} that every configuration is identified by at most four elements of the form
$(h_p,h_{r},X_p,X_{r})$ with $h_p,h_{r}\in\edges(\HG_2)$ and $X_p,X_{r}\in\nodes(\HG_1)$. Therefore, any configuration may be encoded by (at
most) four indexes whose maximum size is $\log \max \{|\edges(\HG_2)|,|\nodes(\HG_1)|\}$. Moreover, the check at step~{(1)} ensures that the
length of each branch of the computation tree of $\mathcal{M}_G$ is finite, and actually bounded by a polynomial in the size of the input.
For the sake of completeness, observe that all subtasks in the function, such as computing connected components and the like, are easily
implementable in nondeterministic logspace, so that such tasks just correspond to further (polynomially-bounded) branches of the computation
tree of $\mathcal{M}_G$. Thus, $\textsc{GreedyWinningStrategy}$ may be implemented as a \emph{log-space} alternating Turing machine, which
immediately entails the result, because Alternating Logspace is equal to Polynomial Time~\cite{chan-etal-81}.~\hfill~$\Box$
\end{proof}

It is well known that an alternating Turing machine $\mathcal{M}_G$ can be simulated by a standard machine in polynomial time. First, compute
the polynomially-many possible instant descriptions (IDs) of the machine, and build a graph representing the possible connections between any
pair of IDs, according to its transition relation. Then, evaluate this graph along some topological ordering as follows. Mark all IDs without
outcoming arcs associated with final accepting states; then mark all IDs associated with  existential states having a marked successor, or
associated with universal states, and whose successors are all marked. Then, the machine $\mathcal{M}_G$ accepts its input if, and only if, the
starting ID is marked. Moreover, the subgraph induced by the marked nodes encodes its accepting computations.

Moreover, from such a marked graph it is straightforward to compute the strategy graph of a greedy winning strategy, because IDs associated
with (children of) existential states encode the possible choices of the Captain.\footnote{For the sake of completeness note that, by using
these ideas, one might also provide  a direct dynamic programming algorithm to compute a strategy graph by using a bipartite graph representing
all possible configurations and positions of the Robber and Captain game. However, we find the non-deterministic function
$\textsc{GreedyWinningStrategy}$ more elegant and easy to present.} Just visit the graph starting from the initial configuration, but for each
ID associated with an existential state, select one child to be visited arbitrarily (all choices are marked and hence accepting).

\begin{corollary}\label{cor:computeGreedyStrat}
The strategy graph of a greedy winning strategy (if any) in the Robber and Captain game is computable in polynomial time.
\end{corollary}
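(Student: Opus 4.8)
The plan is to read off the desired strategy graph from the accepting computation of the alternating log-space machine $\mathcal{M}_G$ built in the proof of Theorem~\ref{thm:GreedyExistence}, exploiting the textbook simulation of alternating logspace by deterministic polynomial time~\cite{chan-etal-81}. First I would materialize the \emph{configuration graph} of $\mathcal{M}_G$: since $\mathcal{M}_G$ uses only logarithmic work-tape space, it has only polynomially many distinct instantaneous descriptions (IDs), so I would enumerate all of them and insert an arc between two IDs whenever the transition relation permits a one-step move. The bookkeeping here is exactly the encoding from Lemma~\ref{lem:maxconf}: each game configuration is named by at most four indexes $(h_p,h_r,X_p,X_r)$, hence the configuration-level IDs are polynomially many and efficiently enumerable.

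Second, I would evaluate this graph to determine its accepting (marked) IDs. Because the level counter $i$ is cut off at $\texttt{MaxGreedyStrat}(\HG_1,\HG_2)$ in step~(1), every branch of the computation is finite, and the IDs can be processed along a topological order induced by increasing level: mark every accepting final ID; mark an existential ID as soon as it has a marked successor; and mark a universal ID once all its successors are marked. This standard bottom-up evaluation runs in time polynomial in the number of IDs. By the correctness argument already given for Theorem~\ref{thm:GreedyExistence}, the initial ID associated with $(\emptyset,\emptyset,\nodes(\HG_1),0)$ is marked if and only if a greedy winning strategy exists; if it is unmarked, we report that none exists, which settles the ``(if any)'' clause.

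Third --- the heart of the construction --- I would extract the strategy graph from the marked subgraph. The correspondence is that the existential steps of $\mathcal{M}_G$ are precisely the guesses of the next squad $h_r$ at step~(2), so each marked existential ID reachable from a configuration $v_p=(h_p,M_p,C_p)$ records one admissible greedy move $\sigma(v_p)=(h_r,M_r)$ with $M_r=h_r\cap\F(C_p)$ (cf.\ Definition~\ref{def:greedy}), while the subsequent universal branching at step~(4) ranges over exactly the \ecomponent{v_p,M_r} $C_r$ available to the Robber. I would therefore traverse the marked subgraph from the initial configuration and, at each configuration-level existential ID, keep \emph{one} marked child, fixing the Captain's greedy choice, while retaining \emph{all} universal branches, namely the Robber's options. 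Collapsing the internal bookkeeping IDs, this yields a function $\sigma$ whose domain is the set of reachable configurations and whose induced graph $G(\sigma)$ has, from each $v_p$, an arc to $(h_r,M_r,C_r)$ for every retained option $C_r$ (and to the capture configuration when no option exists) --- precisely the arcs prescribed by the definition of the strategy graph.

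Finally, I would check that $\sigma$ is a \emph{greedy winning} strategy. It is greedy by construction, since the retained move always sets $M_r=h_r\cap\F(C_p)$, with $h_r=h_p$ when $h_p\cap C_p\neq\emptyset$ and a guessed squad otherwise. It is winning because every retained branch lies inside an accepting subtree of $\mathcal{M}_G$ and terminates, within the level cutoff, at a capture configuration; hence $G(\sigma)$ is a finite directed acyclic graph, which by definition means that $\sigma$ is winning. The step I expect to be the main obstacle is the bookkeeping in this extraction: a single game configuration corresponds to a short chain of ``internal'' IDs that compute $M_r$, the relevant components, and their representatives, so I must isolate the configuration-level IDs among all IDs and verify that the arcs between consecutive configuration-level IDs coincide exactly with the strategy-graph arcs of the definition --- not merely that they witness acceptance --- and that selecting any single marked existential child always leaves a fully winning continuation, which holds precisely because marking an existential ID certifies a marked, hence accepting, subtree below it.
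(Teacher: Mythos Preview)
Your proposal is correct and follows essentially the same approach as the paper: build the ID graph of the alternating logspace machine $\mathcal{M}_G$, mark accepting IDs bottom-up (one marked successor suffices at existential nodes, all successors required at universal ones), then extract the strategy by traversing the marked subgraph from the initial configuration, selecting a single marked child at each existential step while retaining all universal branches. Your version is in fact more detailed than the paper's, which treats the extraction as ``straightforward'' and does not spell out the bookkeeping about collapsing internal IDs that you flag as the main obstacle.
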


\section{Larger Islands of Tractability}\label{sec:larger}

From the previous section (see Theorem~\ref{thm:costruzione} and Example~\ref{ex:greedy-strat}), we know that monotone winning strategies for
the Captain in the game over $(\HG_1,\HG_2)$ are associated with tree projections of $\HG_1$ w.r.t.~$\HG_2$, and that in some cases it is
possible that there is no monotone winning greedy strategy, although  monotone winning strategies (non-greedy) exist. In this section, we show
that from {\em any} (possibly non-monotone) greedy winning strategy a tree projection can be still computed in polynomial time. The key fact
here is that any non-monotone greedy strategy can be converted into a monotone one, though not a greedy one in general. Based on this
observation, a larger island of tractability for tree projections will be eventually singled out.

\subsection{Nice Strategies and Greedy Tree Projections}

To establish our results, it is useful to consider a special form of strategies that we call {\em nice} (for they remind the notion of nice
tree decompositions of graphs), where at every configuration the Captain first removes those cops that are no longer in the frontier.

Formally, $\sigma$ is a {\em nice} strategy if $\sigma(h_p,M_p,C_p)= (h_p,\partial C_p)$, whenever $\partial C_p \subset M_p$. Because such
inactive cops play no role in the Robber and Captain game, a winning nice strategy exists if (and only if) there exists a winning strategy, and
the same holds for greedy strategies. Just note that restricting the cops to the border of $C_p$ is a legal choice in greedy strategies (it
corresponds to the selection of the same squad $h_{r}=h_p$ before attacking the robber in the component $C_p$ with some further squad).
Clearly enough, such a nice strategy can be computed in polynomial time from any given strategy. Also, if desired, the polynomial time
algorithm for computing a greedy strategy may be easily adapted to compute directly a winning nice greedy strategy (if any).

\begin{figure}[t]
\centering
\includegraphics[width=0.8\textwidth]{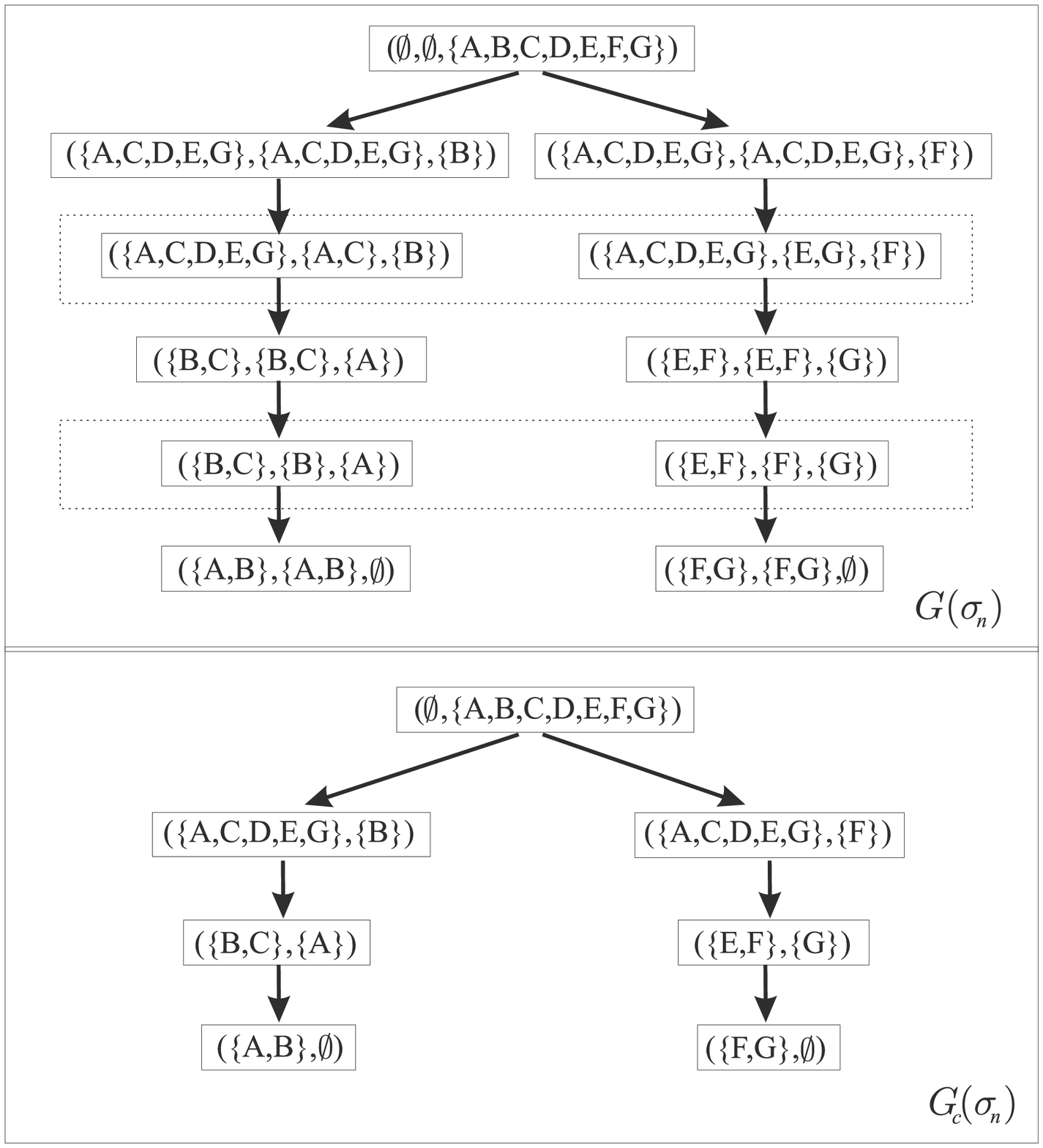}
\caption{The strategy and component graphs for the nice strategy $\sigma_n$ in Example~\ref{ex:nice}.}\label{fig:nice}
\end{figure}

\begin{example}\label{ex:nice}\em
Consider again the setting discussed in Example~\ref{ex:construction} and illustrated in Figure~\ref{fig:greedy-bis}. Note that the strategy
$\sigma$ is not nice. Indeed, Figure~\ref{fig:nice} reports the strategy graph associated with a strategy $\sigma_n$ that is nice and that is
obtained from $\sigma$ by just explicitly adding the configurations where the Captain has to remove the cops that are no longer in the
frontier. \hfill $\lhd$
\end{example}

The reason for introducing nice strategies is that they admit a compact representation. First, given any configuration $(h_p,M_p,C_p)$ and a
Captain's choice $M_{r}$, the \ecomponents{(h_p,M_p,C_p),M_{r}} for the Robber are determined by $C_p$ and $M_{r}$ only, because $\partial C_p$
is computable from $C_p$. Therefore, we use hereafter the simplified notation \ecomponent{C_p,M_{r}} to refer to this set of
\components{M_{r}}.
Moreover, in place of the strategy graph, we can use a \emph{component graph}, defined as follows.

\begin{definition}\em
Let $(\HG_1,\HG_2)$ be a pair of hypergraphs.
Let $G=(N,A)$ be a directed graph whose nodes are pairs of the form $(h_p,C_p)$, where $h_p\in\edges(\HG_2)$, and $C_p$ is either the emptyset
or a \component{\partial C_p} of $\HG_1$ such that $\partial C_p \subseteq h_p$. Then, we say that $G$ is a {\em component graph} if it meets
the following conditions:
\begin{enumerate}
\item[ (1)] There is a \emph{root} node $(\emptyset,\nodes(\HG_1))\in N$ that is the only node without incoming arcs.

\item[ (2)] Each node $(h_p,C_p)\in N$, with $C_p\neq \emptyset$, has outgoing arcs to $m\geq 0$ nodes $(h_{r},\bar C_1),\dots ,(h_{r},\bar
    C_m)$ such that, if $M_{r}$ is the set $\bigcup_{j=1}^m \partial \bar C_j \cup (C_p\setminus \bigcup_{j=1}^m \bar C_j)$, it holds that
    $M_{r}\subseteq h_{r}$ and the \ecomponents{C_p,M_{r}} are the components $\bar C_1,...,\bar C_m$.

\item[ (3)] Each node $(h_p,C_p)\in N$ has an outgoing arc to $(h_{r},\emptyset)$ if $C_p\subseteq h_{r}$. 
\end{enumerate}
\end{definition}

Note that every nice strategy $\sigma$ is encoded by the component graph $G_c(\sigma)=(N,A)$ defined as follows. There is a node $(h_p,C_p)$
(resp., $(h_p,\emptyset)$) in $N$ if there is a configuration $(h_p,\partial C_p,C_p)$ in the domain of $\sigma$ (resp., a capture
configuration $(h_p,\partial C_p,\emptyset$) induced by $\sigma$).
There is an arc in $A$ from a node $(h_p,C_p)$ to a node $(h_r,C_r)$ if there is an arc from $(h_p,M_p,C_p)$ to $(h_r,M_r,C_r)$ in the strategy
graph $G(\sigma)$. No more nodes and arcs occur in $N$ and $A$, respectively.
For instance, the graph depicted on the bottom part of Figure~\ref{fig:greedy-bis} is the component graph associated with the nice strategy
$\sigma_n$ of Example~\ref{ex:nice}.

Conversely, any component graph $G$ encodes a nice strategy $\sigma_{G}$, via the following procedure. Associate the root
$(\emptyset,\nodes(\HG_1))$ with the initial configuration $(\emptyset,\emptyset,\nodes(\HG_1))$. Inductively, assume that a node $(h_p,C_p)$
is associated with a configuration $(h_p,M_p,C_p)$, and that $(h_{r},\bar C_1),...,(h_{r},\bar C_m)$ are the labels of the nodes having an
incoming arc from $(h_p,M_p,C_p)$. Let $M_{r}= \bigcup_{j=1}^m \partial \bar C_j \cup (C_p\setminus \bigcup_{j=1}^m \bar C_j)$, with
$M_{r}\subseteq h_{r}$. Then, define $\sigma_{G}(h_p,M_p,C_p)=(h_{r},M_{r})$, and define $\sigma_G(h_{r},M_{r},\bar C_j)= (h_{r},\partial \bar
C_j)$, with $j\in \{1,...,m\}$, in the case where $\partial \bar C_j \subset M_{r}$.

\begin{theorem}\label{thm:greedy}
A tree projection of $\HG_1$ w.r.t.~$\HG_2$ can be computed in polynomial time if the Captain has a greedy winning strategy on $(\HG_1,\HG_2)$.
\end{theorem}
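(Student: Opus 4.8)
The plan is to reduce the claim to the monotone case, for which Theorem~\ref{thm:costruzione} already provides a polynomial-time construction of a tree projection. Concretely, I would proceed in three stages: first, compute a greedy winning strategy $\sigma_0$ on $(\HG_1,\HG_2)$ in polynomial time; second, transform $\sigma_0$ into a \emph{monotone} winning strategy $\sigma^\ast$ in polynomial time, without losing the ``winning'' property; and third, feed $\sigma^\ast$ to the construction of Theorem~\ref{thm:costruzione} to obtain a tree projection of $\HG_1$ w.r.t.~$\HG_2$. The first stage is immediate from Corollary~\ref{cor:computeGreedyStrat}, and we may additionally assume $\sigma_0$ is nice (hence representable by a polynomial-size component graph) at no asymptotic cost; the third stage is a direct application of an already available result. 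Thus the whole weight of the argument falls on the second stage, the polynomial-time monotonization of a non-monotone greedy strategy.

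For the monotonization I would iterate the local transformation of equation~(\ref{def:transformation}). As long as the current strategy $\sigma$ contains a non-monotone move, say $\sigma(v_r)=(h_s,M_s)$ with $\E((M_r,C_r),M_s)\neq\emptyset$, I apply that transformation to the parent configuration $v_p$ leading to $v_r$, replacing $M_r$ by $M_r'=M_r\setminus \E((M_r,C_r),M_s)$ and $C_r$ by the enlarged component $C_r'$. By Lemma~\ref{lem:construction}(5) the resulting strategy is again winning; by part~(1) the move out of the new configuration $v_r'$ is now monotone; and by parts~(2)--(4) the subtree hanging below $v_r'$ is exactly that below $v_r$, while the remaining options of $v_p$ are merely rearranged (some being absorbed into $C_r'$). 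In particular, no new configuration is ever created, so the number of configurations stays bounded by $\texttt{MaxGreedyStrat}(\HG_1,\HG_2)$ throughout, which is polynomial by Lemma~\ref{lem:maxconf}.

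To bound the number of iterations, I would introduce the potential $\mu(\sigma)=\sum_{(h,M,C)} (|\nodes(\HG_1)|-|C|)$, the sum ranging over the configurations in the domain of $\sigma$. Each transformation strictly enlarges the component $C_r$, since the escape door $\E((M_r,C_r),M_s)$ is non-empty and, being disjoint from $C_r$ yet contained in $C_r'$, lies in $C_r'\setminus C_r$; hence its term strictly decreases. Every other surviving configuration keeps its component, and the absorbed siblings only remove non-negative terms from the sum. Thus $\mu$ strictly decreases at every step while remaining non-negative and bounded by $|\nodes(\HG_1)|\times\texttt{MaxGreedyStrat}(\HG_1,\HG_2)$, yielding a polynomial bound on the number of transformations. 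Since each transformation (computing escape doors, components, and the merged options) is feasible in polynomial time, $\sigma^\ast$ is obtained in polynomial time and is monotone, because the loop stops precisely when no non-monotone move remains.

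The step I expect to be the main obstacle is exactly this termination-with-a-polynomial-bound argument, rather than correctness, which is inherited from Lemma~\ref{lem:construction}. The delicate point is that fixing the move out of $v_r$ may turn the move \emph{into} $v_r$ (the move at $v_p$) non-monotone, so non-monotonicity can propagate toward the root instead of disappearing outright; the potential $\mu$ is designed precisely to absorb this phenomenon, since each such propagation step again enlarges a component and hence decreases $\mu$, and the process necessarily bottoms out at the root configuration $(\emptyset,\emptyset,\nodes(\HG_1))$, whose outgoing move is trivially monotone as every option is contained in $\nodes(\HG_1)$. With $\sigma^\ast$ in hand, Theorem~\ref{thm:costruzione} delivers the desired tree projection in time polynomial in the (polynomial) size of $\sigma^\ast$, completing the argument.
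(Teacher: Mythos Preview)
Your overall plan coincides with the paper's: compute a (nice) greedy winning strategy, repeatedly apply the local fix of~(\ref{def:transformation}) until no non-monotone move remains, then invoke Theorem~\ref{thm:costruzione}. Correctness at each step is indeed delivered by Lemma~\ref{lem:construction}. The divergence is in the termination bound, and there your potential argument has a genuine gap.

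The strategy graph (equivalently, the component graph of a nice strategy) is a DAG, not a tree: a configuration $v_r=(h_r,M_r,C_r)$ may have several parents $v_p,v_{p'},\dots$, i.e., distinct configurations all playing $(h_r,M_r)$ for which $C_r$ is an option. The transformation~(\ref{def:transformation}) modifies a \emph{single} parent $v_p$; it introduces a fresh configuration $v_r'=(h_r,M_r',C_r')$ reached from $v_p$, but $v_r$ stays in the domain as long as some other parent $v_{p'}$ still leads to it. Consequently your assertion that ``no new configuration is ever created'' fails, and $\mu$ can \emph{increase}: the new term $|\nodes(\HG_1)|-|C_r'|$ for $v_r'$ is added while the term for $v_r$ is not removed, and nothing forces enough absorbed siblings to disappear to compensate. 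The strict decrease of $\mu$ is therefore not established, and the polynomial bound on the number of iterations does not follow.

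The paper controls termination differently. It processes the component graph in reverse topological order (leaves first); at a non-monotone node $v_j$ it applies the fix once per incoming arc, and the key inductive invariant is that, after $v_j$ has been handled, the subgame rooted at the freshly created node $v_j'$ is already monotone, so $v_j'$ is never revisited. This yields the polynomial bound $|\nodes(G_c(\sigma))|\times\textit{MaxIn}$, where $\textit{MaxIn}$ is the maximum in-degree of the initial component graph. To repair your argument you would need either to impose such an ordering, or to devise a potential that respects the DAG structure (for instance one summed over arcs rather than nodes), together with a proof that the arcs emanating from the new node $v_r'$ do not defeat the decrease---which is precisely the missing ingredient.
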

\begin{proof}
By Theorem~\ref{thm:GreedyExistence}, we can decide in polynomial time whether a winning greedy strategy for the Captain in the game played on
$(\HG_1,\HG_2)$ exists or not. In the negative case, we are done. Otherwise, we compute in polynomial time a winning nice greedy strategy
$\sigma$ (or turn a given strategy into a nice one), and we subsequently build its component graph $G_c(\sigma)$.
Let us now make a copy $G'=(N',A')$ of $G_c(\sigma)$, and note that $G'$ is a directed acyclic graph, because it encodes a winning strategy.

The line of the proof is to show that we can incrementally modify $G'$, until we end up with a component graph still encoding a nice winning
strategy that is however monotone; the result then follows because a tree projection
can be computed in polynomial time from a monotone winning strategy~\cite{GS08}.
We process $G'$ from the leaves to the root, according to any of its topological orderings. Whenever we process a node $v_j$ of $G'$ that is associated with a non-monotone move, we shall eventually apply a local transformation to $G'$, discussed in the
steps (i)--(iv) detailed below and illustrated in Example~\ref{ex:illustra}. In particular, we shall show that the updated graph is still a
component graph encoding a nice winning strategy. Moreover, we shall observe that, after the execution of steps (i)--(iv), the game encoded in
the modified graph and starting at $v_j$ is monotone. Hence, after the root is processed, we end up with a monotone winning strategy. The proof
will be completed by observing that the number of such transformations is polynomially bounded.

Let us now formalize the approach sketched above. Let $\overrightarrow N =v_1,\dots,v_{|N'|}$ be the topologically ordered sequence of the
nodes of $G'$, where the nodes without outgoing arcs, called leaves, are in the first positions, and the node without incoming arcs, its root,
is at the last position. Note that leaves correspond to capture configurations for the robber, while the root $v_{|N'|}=(\emptyset,
\nodes(\HG_1))$ is associated with the starting configuration $(\emptyset, \emptyset, \nodes(\HG_1))$ of the game. Moreover, if $(v,v')\in A'$,
the node $v$ is said to be a parent of $v'$, while $v'$ is said to be a child of $v$. Then, we start modifying the graph $G'$, by navigating
the sequence $\overrightarrow N$ using an index $j$, as discussed below.

Starting with $j=1$,  while $j< |N'|$, consider the current node $v_j$ in the sequence, associated with a configuration $(h_j,M_j,C_j)$
(initially, the first leaf) in the domain of $\sigma_{G'}$. If every child of $v_j$ is labeled by some $(h'',C'')$ with $C''\subseteq C_j$,
then let index $j := j+1$ and continue the ``while'' loop, or stop and output the current graph $G'$ if $v_j$ is the root. Otherwise, let $v_s$
be a child of $v_j$ labeled by $(h_s,C_s)\in N'$ such that $C_s\not\subseteq C_j$, and associated with the configuration $(h_s,M_s,C_s)$. That
is, $\sigma_{G'}(h_j,M_j,C_j)=(h_s,M_s)$ is a non-monotone move. Then, take any parent $v_p$ of $v_j$, and let $(h_p,M_p,C_p)$ the
configuration associated with $v_p$ (whose label is thus $(h_p,C_p)$). Modify the graph so that  $\sigma_{G'}(h_p,M_p,C_p)=(h_j,M_j')$, where
$M_j'=M_j\setminus \E(v_j,M_s)$. In particular, let $C'_j$ be the \component{M'_j} that properly includes $C_j$, and for which thus $C_p\cup
C'_j$ is \connected{M_p\cap M'_j}. Then, the modified component graph will also encode the choice $\sigma_{G'}(h_j,M_j',C_j')=(h_j,\partial
C_j')$ if $\partial C_j' \subset M_j'$,  and $\sigma_{G'}(h_j,\partial C_j',C_j')=(h_s,M_s)$. The transformation of the graph is as follows:

\begin{itemize}
\item[(i)] Add a node $v_j'$ labeled by $(h_j,C'_j)$ to $N'$ and to the sequence $\overrightarrow N$ in the position before $v_j$, and add
    to $A'$ an arc from $v_j'$  to each child of $v_j$, i.e., to nodes labeled by $(h_s,C'')$, for each  \ecomponent{C_j',M_s} $C''$.

\item[(ii)] Remove from $A'$ all outgoing arcs of $v_p$ to nodes whose labels do not contain \ecomponents{C_p,M_j'} (in particular, the arc
    towards $v_j$ is removed).

\item[(iii)] Add to $A'$ an arc from $v_p$ to $v_j'$.

\item[(iv)] Remove from $N'$ any node different from the root which is left without incoming arcs, and continue the ``while'' loop
    considering again node $v_j$, or the next available node in $\overrightarrow N$ if $v_j$ has been removed by $N'$.
\end{itemize}

\begin{figure}[t]
\centering
\includegraphics[width=0.99\textwidth]{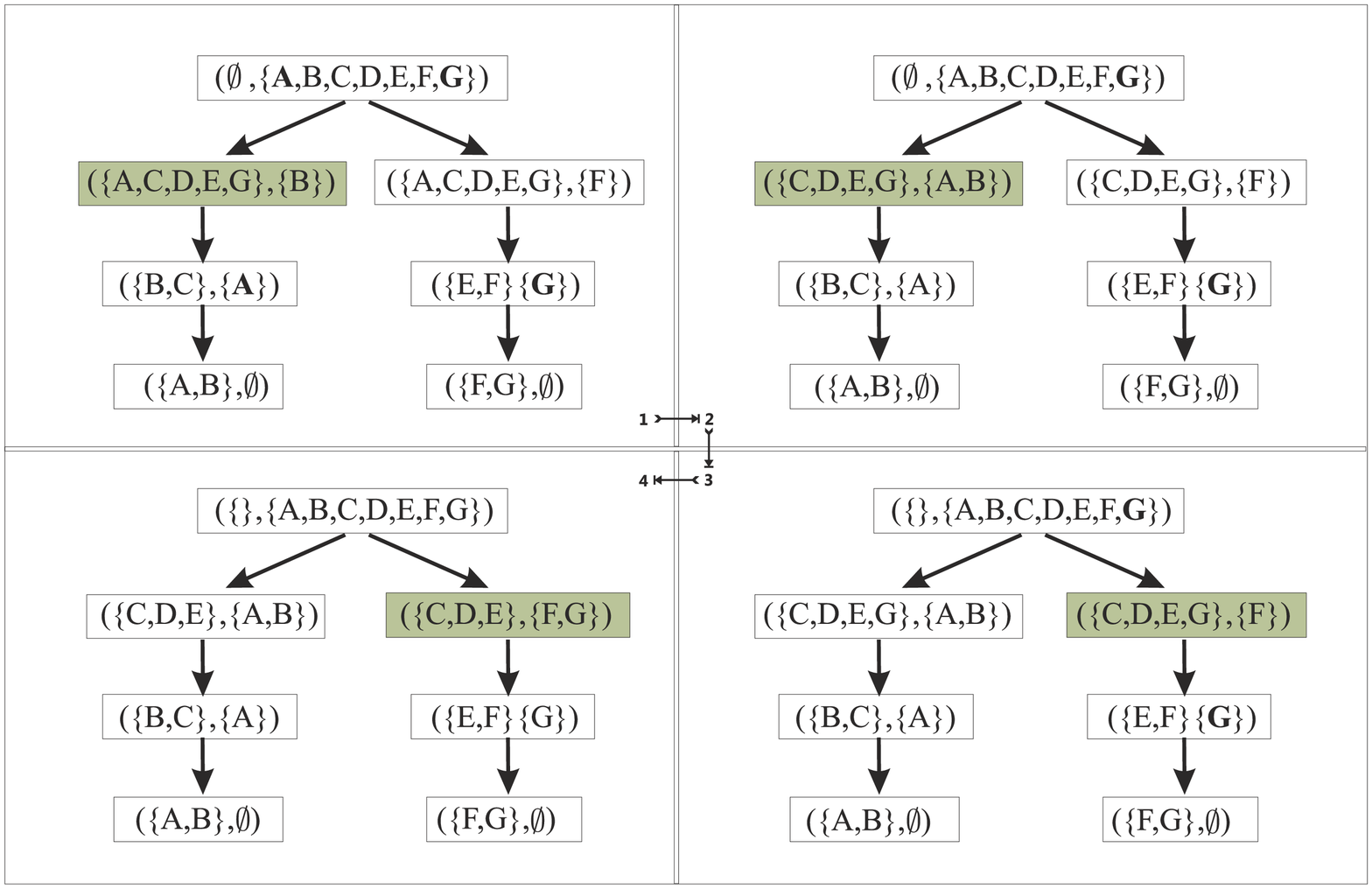}
\caption{Illustration of the algorithm in the proof of Theorem~\ref{thm:greedy}.}\label{fig:greedybis}
\end{figure}

\begin{example}\label{ex:illustra}\em
The application of the above procedure to the nice strategy $\sigma_n$ discussed in Example~\ref{ex:nice} is illustrated in
Figure~\ref{fig:greedybis}. Note that two non-monotone moves are removed in total. Note that, at the end of the transformation, we get a
component graph encoding precisely the monotone strategy $\bar \sigma$, whose strategy graph is illustrated in
Figure~\ref{fig:strategy-new-bis}. \hfill $\lhd$
\end{example}

First observe that every iteration of the loop at step~1 above, precisely implements on the graph $G'$ the transformation (of the non-monotone
strategy encoded by $G'$) described by Expression~\eqref{def:transformation}, and whose properties are described by
Lemma~\ref{lem:construction}. In more detail, with these properties in mind, by executing steps~(i)--(iii) we replace the Captain's choice
$(h_j,M_j)$ at $(h_p,M_p,C_p)$ by the new choice $(h_j,M_j')$, and we get the following situation: (a) Because of the new choice $M_j'$, only
one new $\ecomponent{C_p,M_j'}$ is available to the robber, that is, the $\component{M_j'}$ $C_j'$ properly including the $\component{M_j}$
$C_j$. As a consequence, at step~(i) the one node $v_j'$ corresponding to this component is added to $N'$.  (b) The $\ecomponents{C_j',M_s}$
are the same as the $\ecomponents{C_j,M_s}$, so that  the outgoing arcs of $v_j'$ will be the same as the node $v_j$. That is, we keep the same
winning strategy as before, as the Robber's options after the Captain's choice $M_s$  are the same as before (and hence the Captain knows how
to successfully attack them). (c) The set of $\ecomponents{C_p,M_j'}$, with the exception of the new $C_j'$, are a subset of the
$\ecomponents{C_p,M_j}$. In fact, some components may collapse after the new choice of the Captain. Then, at step~(iv), we remove the nodes
associated with $\ecomponents{C_p,M_j}$ that are now left without incoming arcs. For instance, it is possible that we delete $v_j$ if $v_p$ was
its only parent, or it is possible that we delete some nodes associated with collapsed components. Note that the new graph $G'$ obtained from
these steps is still a component graph, hence it encodes a (new) nice strategy $\sigma_{G'}$.

Therefore, Lemma~\ref{lem:construction} entails that, after each iteration and thus after the entire procedure, the strategy $\sigma_{G'}$ is a
winning strategy. We claim that it is actually a monotone winning strategy, by a simple inductive argument: if $v_j$ is the current node, after
the execution of steps (i)--(iv), $\sigma_{G'}$ is a monotone winning strategy for the game starting at the configuration $v_j$. Then, the
claim follows because, for $j=|N'|$, it means that  $\sigma_{G'}$ is a monotone winning strategy for the whole game starting at the root.

The base case is when the algorithm starts at $j=1$, and hence the statement holds because the first position in $\overrightarrow N$ is
occupied by some leaf, which is a capture configuration of the winning strategy.
Now assume that the statement holds for $j-1$, and consider the execution of the above procedure on node $v_j$. Note that the proposed
transformation deals with just one (possibly new) component $C_j'$ instead of the strictly smaller $C_j$; everything else in the strategy does
not change, in particular no node preceding $v_j$ in the topological order is affected by the transformation.
Then, the monotonicity of the strategy on the game starting at $v_j$ immediately follows from the induction hypothesis and from
Lemma~\ref{lem:construction}.(1), which says that $\E(v_j',M_s)=\emptyset$ and hence that this move is monotone, so that $C''\subseteq C'_j$,
for each $\ecomponent{C_j',M_{i+1}}$ $C''$.

Because each iteration in feasible in polynomial time, it just remains to show that the whole procedure requires at most polynomially many
iterations. To this end, note that whenever some node $v_j$ encodes a non-monotone move, one node $v_j'$ is added to $N'$ for each parent $v_p$
of $v_j$. Indeed, the node $v_j$ is considered again after the first iteration where it was evaluated, if it still has incoming arcs (see
step~(iv)). However, after steps (i)--(iv), $\sigma_{G'}$ is a monotone winning strategy for the game starting at the new configuration $v_j'$.
Therefore, no new node will be subject to further transformations in subsequent iterations along the given topological ordering of $N'$. It
follows that the number of iterations of the described procedure is bounded by $\nodes(G_c(\sigma))\times {\it MaxIn}$, where ${\it MaxIn}$ is
the largest in-degree over the nodes of $G_c(\sigma)$. Thus, the number of iterations is bounded by a polynomial in the size of the strategy
graph of the greedy winning strategy, which is in its turn polynomial in the size of $(\HG_1,\HG_2)$.

Finally, from the monotone winning strategy  $\sigma_{G'}$ encoded by the output $G'$ of the above procedure, a tree projection $\HG_a$ of
$(\HG_1,\HG_2)$ is available. Just define $\nodes(\HG_a)=\nodes(\HG_1)$ and $\edges(\HG_a)= \{ M \mid \sigma_{G'}(v)=(h,M)$ {\em for some
configuration}  $v$ {\em in the domain of } $\sigma_{G'}\}$. See~\cite{GS08}, for more detail about such a relationship between monotone
strategies and tree projections.~\hfill~$\Box$
\end{proof}

With the above result in place, let $\C_{gtp}$ denote the class of all pairs $(Q,\V)$ such that there exists a greedy winning strategy $\sigma$
for the Captain in the Robber and Captain game on $(\HG_Q,\HG_\V)$. As shown in the proof of Theorem~\ref{thm:greedy}, based on $\sigma$ a tree
projection of $\HG_Q$ w.r.t.~$\HG_\V$, which we call \emph{greedy tree projection}, can be computed in polynomial time. Therefore, the
following is established.

\begin{corollary}\label{cor:greedyTP}
$\C_{gtp}$ is an island of tractability.
\end{corollary}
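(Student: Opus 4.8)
The plan is to verify directly the two defining requirements of an \emph{island of tractability} for the class $\C_{gtp}$: polynomial-time recognition, and polynomial-time evaluation of $Q$ on every database by exploiting the available views. Both follow by composing the machinery already established, so the corollary is essentially a packaging result whose difficulty has been fully absorbed into the preceding theorems.

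First I would handle recognition. Given a pair $(Q,\V)$, one builds the associated hypergraphs $\HG_Q$ and $\HG_\V$, which is clearly feasible in polynomial time from the syntactic descriptions of the query and of the views. By definition, $(Q,\V)\in\C_{gtp}$ holds exactly when the Captain has a greedy winning strategy in the Robber and Captain game played on $(\HG_Q,\HG_\V)$. Theorem~\ref{thm:GreedyExistence} states that the existence of such a strategy is decidable in polynomial time, so membership in $\C_{gtp}$ can be tested within the required bound.

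Next I would address evaluation. Suppose $(Q,\V)\in\C_{gtp}$. By Theorem~\ref{thm:greedy}, from a greedy winning strategy one can compute, in polynomial time, a tree projection $\HG_a$ of $\HG_Q$ with respect to $\HG_\V$, that is, an acyclic hypergraph with $\HG_Q\leq\HG_a\leq\HG_\V$. The inclusion $\HG_a\leq\HG_\V$ guarantees that every hyperedge of $\HG_a$ is contained in some view, so that the relation associated with each vertex of the corresponding join tree is materializable from the relations available in $\V$; the inclusion $\HG_Q\leq\HG_a$ guarantees that this acyclic structure covers all query atoms. Since $Q$ thus admits a tree projection, the classical observation of Goodman and Shmueli~\cite{GS84} (equivalently, evaluating the query along the join tree of $\HG_a$ via Yannakakis' algorithm) yields polynomial-time evaluation of $Q$ over every database.

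The two parts together establish the corollary. There is no genuine obstacle remaining at this level, precisely because all the hard work lives in Theorem~\ref{thm:GreedyExistence} (tractable recognition, despite the general $\NP$-hardness of deciding tree-projection existence) and in Theorem~\ref{thm:greedy} (extracting a \emph{valid}, i.e.\ monotone, decomposition out of a possibly non-monotone greedy strategy). The only point I would state carefully is that the object produced is a concrete acyclic hypergraph whose hyperedges are each covered by a view, so that the subproblem relations required by the acyclic evaluation are genuinely computable from $\V$ rather than merely existing abstractly.
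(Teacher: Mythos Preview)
Your proposal is correct and follows essentially the same route as the paper: recognition via Theorem~\ref{thm:GreedyExistence}, then construction of a tree projection via Theorem~\ref{thm:greedy}, and finally polynomial-time evaluation by the classical Goodman--Shmueli/Yannakakis machinery for acyclic instances. The paper presents the corollary without a formal proof, merely pointing to the preceding theorem, so your write-up is actually more explicit than the original.
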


\subsection{Captain vs Marshal}\label{sec:capVSmar}

A class of tractable pairs related to our class $\C_{gtp}$ has been defined in~\cite{adler08} in terms of the \emph{Robber and Marshal game}
played by one Marshal and the Robber on the hypergraphs $(\HG_1,\HG_2)$. This game has been originally defined on a single hypergraph to
characterize hypertree decompositions~\cite{gott-etal-03}, and its natural extension to pairs of hypergraphs has been defined and studied
in~\cite{adler08}.

The game is as follows.
The Marshal may control one hyperedge of $\HG_2$, at each step.
The Robber stands on a node and can run at great speed along hyperedges of $\HG_1$; however, (s)he is not permitted to run through a node that
is controlled by the Marshal. Thus, a \emph{configuration} is a pair $(h,C)$, where $h$ is the hyperedge controlled by the Marshal, and $C$ is
an \component{h} where the Robber stands.
Let $(h_p,C_p)$ be a configuration. This is a capture configuration, where the Marshal wins, if $C_p\subseteq h_p$. Otherwise, the Marshal
moves to another hyperedge $h_{r}\in \edges(\HG_2)$; while (s)he moves, the Robber may run through those nodes that are left by the Marshal or
not yet occupied. Thus, the Robber selects an \component{h_{r}} $C_{r}$ such that $C_{r}\cup C_p$ is \connected{h_p\cap h_{r}}.
We say that the Marshal has a \emph{winning strategy} if, starting from the initial configuration $(\emptyset,\node)$, (s)he may end up the
game in a capture position, no matter of the Robber's moves. A winning strategy is \emph{monotone} if the Marshal may monotonically shrink the
set of nodes where the Robber stands.

Because only nodes in the frontier are actually used at each step in the monotone Robber and Marshal game, this game and the monotone variants
of the Robber and Captain game clearly define the same hypergraph properties.

\begin{fact}\label{prop:greedy-monotone}
The following are equivalent:
\begin{itemize}
\item[(1)] There is a monotone winning strategy for the Marshal in the Robber and Marshal game on $(\HG_1,\HG_2)$.

\item[(2)] There is a monotone winning greedy-strategy for the Captain in the Robber and Captain game on $(\HG_1,\HG_2)$.
\end{itemize}
\end{fact}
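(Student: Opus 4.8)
The plan is to prove both implications through a single move-by-move correspondence between plays of the two \emph{monotone} games, the whole argument resting on a \emph{frontier equivalence}: for any component $C_p$ of $\HG_1$ and any hyperedge $h_r\in\edges(\HG_2)$, setting $M_r=h_r\cap\F(C_p)$ makes $h_r$ and $M_r$ indistinguishable as far as the Robber confined to $C_p$ is concerned. Concretely, for two nodes $X,Y\in C_p$ the hyperedge witnessing any $[h_r]$-adjacency belongs to $\edges(C_p)$ and is therefore contained in $\F(C_p)$, on which $h_r$ and $M_r$ agree; moreover no node of $C_p$ can lie in $h_r\setminus M_r$, since $C_p\subseteq\F(C_p)$. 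Hence the \components{h_r} contained in $C_p$ coincide exactly with the \components{M_r} contained in $C_p$, and $C_p\subseteq h_r$ holds if and only if $M_r\supseteq C_p$, i.e.\ if and only if no surviving component remains (a capture). I expect this to be the technical heart of the proof.

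Next I would set up the correspondence on configurations. In the Robber and Marshal game a configuration $(h_p,C_p)$ always satisfies $C_p\cap h_p=\emptyset$, because $C_p$ is an \component{h_p}. In a monotone greedy Captain strategy the same disjointness holds at every reachable configuration $(h_p,M_p,C_p)$: after a greedy move $M_r=h_r\cap\F(C_p)$ the Robber selects an \component{M_r} $C_r$, so $C_r\cap M_r=\emptyset$; monotonicity gives $C_r\subseteq C_p\subseteq\F(C_p)$, whence $h_r\cap C_r\subseteq h_r\cap\F(C_p)=M_r$ and therefore $h_r\cap C_r=\emptyset$. Consequently the Captain is always in the ``free squad'' case of Definition~\ref{def:greedy} (with $h_p\cap C_p=\emptyset$), exactly mirroring the Marshal's free choice of a new hyperedge; at the initial configuration the match is immediate, since $\F(\nodes(\HG_1))=\nodes(\HG_1)$ forces $M_r=h_r$. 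I would then map each Marshal configuration $(h_p,C_p)$ to the Captain configuration $(h_p,M_p,C_p)$ reached along the corresponding play, translating the Marshal move to $h_r$ into the greedy Captain move $(h_r,\,h_r\cap\F(C_p))$, and conversely translating a greedy Captain move back into the Marshal move to the same $h_r$.

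The correctness of this map is where the frontier equivalence is used. Under a monotone move both games restrict the Robber to sub-components $C_r\subseteq C_p$, so the two connectivity side-conditions---$C_p\cup C_r$ being \connected{M_p\cap M_r} for the Captain, and $C_r\cup C_p$ being \connected{h_p\cap h_r} for the Marshal---are automatically satisfied, because $C_p$ is a component disjoint from $M_p$ (resp.\ $h_p$) and deleting a subset of its border cannot disconnect it. Hence in both games the Robber's options are precisely the surviving sub-components of $C_p$, which coincide by the frontier lemma, and the capture configurations coincide as well. Thus a monotone Marshal play and its associated monotone greedy Captain play visit identical sequences of components, and one strategy graph is acyclic (winning) if and only if the other is. The main obstacle, as already flagged, is establishing the frontier equivalence cleanly and verifying that monotonicity genuinely confines the Robber to $C_p$, so that the cops of $h_r$ lying outside $\F(C_p)$ are truly irrelevant; once that is in place, both implications follow symmetrically and monotonicity is preserved in each direction by construction.
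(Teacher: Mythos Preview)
Your proposal is correct and follows precisely the intuition the paper uses, namely that in the monotone regime only the nodes in the frontier $\F(C_p)$ are relevant, so the Marshal's full hyperedge $h_r$ and the Captain's greedy choice $M_r=h_r\cap\F(C_p)$ are interchangeable. The paper, however, does not spell out a proof: it states the result as a \emph{Fact} with the single justification ``Because only nodes in the frontier are actually used at each step in the monotone Robber and Marshal game, this game and the monotone variants of the Robber and Captain game clearly define the same hypergraph properties.'' Your write-up is thus a detailed unpacking of that sentence; the frontier-equivalence lemma and the observation that monotonicity forces $h_p\cap C_p=\emptyset$ (so the Captain is always free to pick a new squad, mirroring the Marshal) are exactly the two ingredients implicit in the paper's claim.
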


Let $\C_{rm}$ denote the class of all pairs $(Q,\V)$ such that there exists a  monotone winning strategy for the Marshal on $(\HG_Q,\HG_\V)$.
From the results in~\cite{adler08,adler-thesis}, $\C_{rm}$ is an island of tractability as well. However, the set of tractable instances
identified by greedy winning strategies in the Robber and Captain game properly includes this class. The reason is that greedy winning
strategies are allowed to be non-monotone.

\begin{theorem}\label{thm:comparazione}
$\C_{rm} \subset \C_{gtp}$.
\end{theorem}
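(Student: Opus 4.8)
The plan is to establish the strict inclusion in two steps: first the containment $\C_{rm} \subseteq \C_{gtp}$, and then the existence of a pair witnessing that the containment is proper.

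For the containment, I would argue directly from Fact~\ref{prop:greedy-monotone}. Take any $(Q,\V) \in \C_{rm}$, so that there is a monotone winning strategy for the Marshal on $(\HG_Q,\HG_\V)$. By the equivalence stated in Fact~\ref{prop:greedy-monotone}, this yields a monotone winning greedy-strategy for the Captain on the same pair. Since such a strategy is in particular a greedy winning strategy (monotonicity being an additional requirement that we are free to drop), the pair $(Q,\V)$ meets the defining condition of $\C_{gtp}$, and hence $(Q,\V) \in \C_{gtp}$. This direction is essentially immediate once the Fact is available.

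For properness, I would reuse the running example. Consider the hypergraphs $\HG_1$ and $\HG_2$ of Figure~\ref{fig:greedy-bis}, realized as $\HG_Q$ and $\HG_\V$ for a suitable query $Q$ and view set $\V$ (every hypergraph is the hypergraph of some set of atoms, so this realization is always possible). By Example~\ref{ex:construction}, the Captain has a non-monotone greedy winning strategy on this pair, so the pair lies in $\C_{gtp}$. On the other hand, Example~\ref{ex:greedy-strat} establishes that there is \emph{no} monotone greedy winning strategy for the Captain on this pair. Applying Fact~\ref{prop:greedy-monotone} in its contrapositive form, the absence of a monotone winning greedy-strategy for the Captain is equivalent to the absence of a monotone winning strategy for the Marshal; hence this pair is not in $\C_{rm}$. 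Combining the two observations produces a pair in $\C_{gtp} \setminus \C_{rm}$, which together with the containment gives $\C_{rm} \subset \C_{gtp}$.

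The hard part is the strictness argument, and it is the only place requiring care rather than bookkeeping: one must use Fact~\ref{prop:greedy-monotone} to exclude \emph{every} monotone Marshal strategy, not merely the greedy ones, by converting ``no monotone greedy Captain strategy'' into ``no monotone Marshal strategy.'' I would emphasize that the separating pair does admit a monotone winning Captain strategy (Example~\ref{ex:construction-bis}), but that strategy is \emph{not} greedy; this is precisely why the separation is possible, and why insisting on greedy play while permitting non-monotonicity captures strictly more pairs than the monotone Marshal game.
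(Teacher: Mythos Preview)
Your proof is correct and follows essentially the same approach as the paper: the containment via Fact~\ref{prop:greedy-monotone}, and the strictness via the running example of Figure~\ref{fig:greedy-bis} (Example~\ref{ex:greedy-strat}), using the contrapositive of Fact~\ref{prop:greedy-monotone} to rule out any monotone Marshal strategy. Your write-up is more explicit about the role of the contrapositive and about why the coexisting monotone but non-greedy Captain strategy does not spoil the separation, which is a helpful clarification but not a different argument.
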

\begin{proof}
Because greedy strategies are not required to be monotone, $\C_{rm} \subseteq \C_{gtp}$ follows from Fact~\ref{prop:greedy-monotone}.
For the proper inclusion, just consider again Example~\ref{ex:greedy-strat}. The pair of hypergraphs shown in Figure~\ref{fig:greedy-bis} is
such that the Marshal has no monotone winning strategy, while the Captain has a (non-monotone) winning greedy strategy.\footnote{This example
is in fact inspired by a similar simpler pair of hypergraphs where no monotone strategy for the Marshal exists, described
in~\cite{adler08}.}~\hfill~$\Box$
\end{proof}

For completeness, recall that the non-monotone variant of the Marshal and Robber game is instead too powerful to be useful. Indeed, there are
pairs of hypergraphs where the Marshal has a non-monotone winning strategy but no tree projection exists. We refer the interested reader
to~\cite{adler08} for more detail about the monotonicity gap in the Robber and Marshal game.

\section{Applications}\label{sec:methods}

In this section, we explore two applications of the results derived about greedy tree projections. In particular, we first move from the
general setting of tree projections to analyze specific decomposition methods, and we then focus on tree projections for queries to be answered
over databases whose relations have ``small'' arities.

\subsection{Greedy Hypertree Decompositions and Further Greedy Methods}

The tractability result about the general case of greedy tree projections can be immediately applied to every structural decomposition method,
in order to get new tractable variants of these methods.
To carry out the elaborations, observe that any structural decomposition method {\tt DM} can be viewed as a method associating a set $\V$ of
views to any given query $Q$. Indeed, the decompositions of $Q$ according to {\tt DM} are precisely tree projections of $\HG_Q$
w.r.t.~$\HG_\V$.

Given this correspondence, it is then natural to consider the greedy variant of any structural decomposition method {\tt DM}, denoted by
$\textit{greedy-}{\tt DM}$, whose associated decompositions are the \emph{greedy} tree projections of $\HG_Q$ w.r.t.~$\HG_\V$.
From Corollary~\ref{cor:greedyTP}, every decomposition method, possibly an intractable one such as the generalized hypertree decomposition
method, defines an island of tractability by means of its greedy variant.

\begin{fact}
Let {\tt DM} be a structural decomposition method and let $\textit{greedy-}{\tt DM}$ be its greedy variant. Then, the class of all queries
having a $\textit{greedy-}{\tt DM}$ decomposition is recognizable in polynomial time, and every query in the class may be evaluated in
polynomial time over any given database.
\end{fact}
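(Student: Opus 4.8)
The plan is to reduce the statement to the island-of-tractability result already obtained for greedy tree projections, exploiting the correspondence, recalled just before the statement, between the decompositions of a method {\tt DM} and tree projections with respect to a suitable resource hypergraph.

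The first step is to make the reduction $Q\mapsto\V$ explicit. Since {\tt DM} is a (purely) structural decomposition method, it associates with any query $Q$ a set $\V$ of views---the polynomially-many clusters of nodes (or edges) that the method is permitted to use---and, crucially, this association depends only on $\HG_Q$ and is computable in polynomial time. For instance, for the generalized hypertree decomposition method of width $k$, one takes $\V$ so that $\HG_\V=\HG_Q^k$, i.e.~the views are all unions of at most $k$ atoms of $Q$; these are $O(|\edges(\HG_Q)|^k)$ many and trivially computable. By the defining correspondence, the $\textit{greedy-}{\tt DM}$ decompositions of $Q$ are exactly the greedy tree projections of $\HG_Q$ w.r.t.~$\HG_\V$, so that $Q$ has a $\textit{greedy-}{\tt DM}$ decomposition if, and only if, $(Q,\V)\in\C_{gtp}$.

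With this bridge in place, both claims follow from Section~\ref{sec:larger}. For recognizability, I would compute $\V$ from $Q$ in polynomial time and then invoke Theorem~\ref{thm:GreedyExistence} to decide, again in polynomial time, whether the Captain has a greedy winning strategy on $(\HG_Q,\HG_\V)$; by the equivalence above this is precisely the membership test for the class. For the efficient evaluation over any database, I would use Corollary~\ref{cor:computeGreedyStrat} and Theorem~\ref{thm:greedy} to compute, in polynomial time, a greedy tree projection $\HG_a$ of $\HG_Q$ w.r.t.~$\HG_\V$, and then evaluate $Q$ over the acyclic structure $\HG_a$ using the relations supplied by the views in $\V$; this takes polynomial time by the classical observation of Goodman and Shmueli~\cite{GS84} that queries admitting a tree projection are evaluable in polynomial time. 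In essence the Fact is nothing but Corollary~\ref{cor:greedyTP} ($\C_{gtp}$ is an island of tractability), transported from pairs $(Q,\V)$ to queries $Q$ through the fixed, polynomial-time map $Q\mapsto\V$ induced by {\tt DM}.

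The genuinely hard work has already been discharged by the cited theorems: the polynomial-time decidability of greedy winning strategies (Theorem~\ref{thm:GreedyExistence}) and the polynomial-time extraction of a (monotone) tree projection from an arbitrary, possibly non-monotone, greedy winning strategy (Theorem~\ref{thm:greedy}). The only point that requires care here is the first step, namely checking that for the concrete method {\tt DM} at hand the induced view set $\V$ is indeed of polynomial size and polynomial-time computable from $\HG_Q$ alone; this is exactly where the hypothesis that {\tt DM} is \emph{purely} structural---and thus oblivious to the database---is used, and once it is granted nothing beyond citing the stated results is needed.
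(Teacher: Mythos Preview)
Your proposal is correct and follows exactly the paper's own line: the paper presents this Fact without a separate proof, deriving it directly from Corollary~\ref{cor:greedyTP} via the observation (stated just before the Fact) that any structural method {\tt DM} associates with $Q$ a polynomial-time computable set $\V$ of views whose greedy tree projections are the $\textit{greedy-}{\tt DM}$ decompositions. Your write-up simply unpacks this reduction in more detail, invoking Theorem~\ref{thm:GreedyExistence}, Theorem~\ref{thm:greedy}, and the Goodman--Shmueli evaluation result, which is precisely what Corollary~\ref{cor:greedyTP} encapsulates.
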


We next focus on the greedy variant of the method based on generalized hypertree decompositions. Let $k\geq 1$. Recall from
Section~\ref{sec:framework} that the width-$k$ generalized hypertree decompositions of a query $Q$ are the tree projections of
$(\HG_Q,\HG_Q^{k})$. Indeed, we are considering one distinct view over each set of variables that can be covered by at most $k$ query-atoms.

\begin{definition}\em
A width-$k$ {\em greedy hypertree-decomposition} (we omit ``generalized'', for short) of a conjunctive query $Q$ is any greedy tree projection
of $(\HG_Q,\HG_Q^{k})$. Accordingly, the {\em greedy (generalized) hypertree-width} of $Q$, denoted by $\textit{gr-hw}$, is the smallest $k$
such that $Q$ has a greedy hypertree decomposition of width $k$.
\end{definition}

This greedy variant provides a new tractable approximation of the (intractable) notion of generalized hypertree decomposition, which is better
than (standard) hypertree decompositions.

\begin{figure}[t]
\centering
\includegraphics[width=0.98\textwidth]{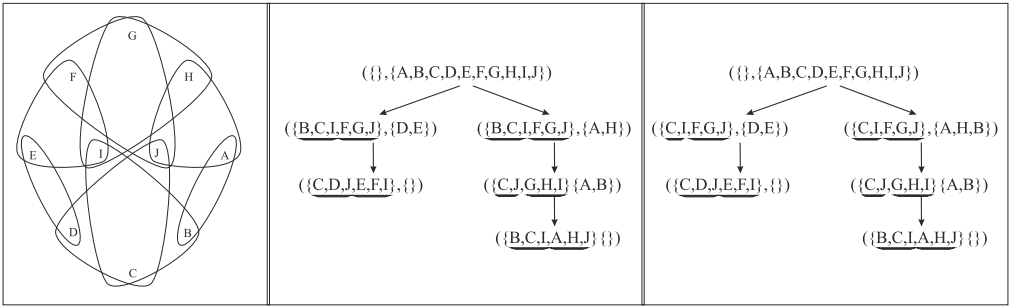}
\caption{Examples in the proof of Fact~\ref{greedy-hw}.}
\label{fig:GMS}
\end{figure}

\begin{fact}\label{greedy-hw}
For any query $Q$, $\textit{ghw}(Q)\leq \textit{gr-hw}(Q) \leq \textit{hw}(Q)$ holds. Moreover, there are queries $Q$ for which
$\textit{gr-hw}(Q) < \textit{hw}(Q)$, even for $\textit{gr-hw}(Q)=2$.
\end{fact}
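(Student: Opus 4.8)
The plan is to prove the two parts of Fact~\ref{greedy-hw} separately: first the chain of inequalities $\textit{ghw}(Q)\leq \textit{gr-hw}(Q) \leq \textit{hw}(Q)$, and then the strict-separation example witnessing $\textit{gr-hw}(Q) < \textit{hw}(Q)$ with $\textit{gr-hw}(Q)=2$.

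For the inequality chain, the key observation is that all three widths measure the same kind of object (tree projections of $(\HG_Q,\HG_Q^k)$), differing only in the \emph{class} of strategies allowed. The lower bound $\textit{ghw}(Q)\leq \textit{gr-hw}(Q)$ is immediate: by Theorem~\ref{thm:greedy}, any greedy winning strategy on $(\HG_Q,\HG_Q^k)$ yields a genuine tree projection of $\HG_Q$ w.r.t.\ $\HG_Q^k$, and the existence of such a tree projection is exactly the statement that $\textit{ghw}(Q)\leq k$. So if $Q$ has a width-$k$ greedy hypertree decomposition, then $\textit{ghw}(Q)\leq k$, giving $\textit{ghw}(Q)\leq \textit{gr-hw}(Q)$. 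For the upper bound $\textit{gr-hw}(Q)\leq \textit{hw}(Q)$, let $k=\textit{hw}(Q)$ and take a width-$k$ hypertree decomposition. The plan is to argue that the special condition~(4) defining hypertree decompositions (the descendant condition $\nodes(\lambda(p))\cap\chi(T_p)\subseteq\chi(p)$) is precisely what forces the corresponding Captain strategy to be \emph{monotone}, and hence greedy-realizable. Concretely, the monotone Robber and Marshal game characterizes hypertree width (this is the classical result recalled in Section~\ref{sec:capVSmar}, where the Marshal controls one hyperedge of $\HG_2=\HG_Q^k$ at a time), so $\textit{hw}(Q)\leq k$ means the Marshal wins monotonically on $(\HG_Q,\HG_Q^k)$. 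By Fact~\ref{prop:greedy-monotone}, a monotone winning Marshal strategy corresponds to a monotone winning \emph{greedy} strategy for the Captain, which in particular is a greedy winning strategy. Hence $Q$ has a width-$k$ greedy hypertree decomposition, i.e.\ $\textit{gr-hw}(Q)\leq k = \textit{hw}(Q)$.

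For the strict-separation part, the plan is to exhibit a concrete query $Q$ (the hypergraph drawn in Figure~\ref{fig:GMS}, which the excerpt already flags as ``a hypergraph where the generalized hypertree width is strictly smaller than the hypertree width''). On this instance I would show three things: (a) $\textit{hw}(Q)\geq 3$, by arguing no monotone Marshal strategy using at most two hyperedges of $\HG_Q^2$ can capture the Robber---this is exactly the known $\textit{ghw}<\textit{hw}$ gap from~\cite{GMS07}, so $\textit{ghw}(Q)=2$ while $\textit{hw}(Q)\geq 3$; (b) $\textit{ghw}(Q)=2$, so by the chain already proved $\textit{gr-hw}(Q)\geq 2$; and (c) the crucial step, $\textit{gr-hw}(Q)\leq 2$, by explicitly constructing a width-$2$ \emph{non-monotone} greedy winning strategy for the Captain on $(\HG_Q,\HG_Q^2)$. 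Since greedy strategies may be non-monotone (the whole point of Theorem~\ref{thm:comparazione} and Example~\ref{ex:greedy-strat}), this strategy need not respect condition~(4), and Theorem~\ref{thm:greedy} guarantees it nonetheless induces a valid width-$2$ tree projection. Combining, $\textit{gr-hw}(Q)=2<3\leq\textit{hw}(Q)$.

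The main obstacle is step~(c): writing down an explicit width-$2$ non-monotone greedy strategy on the Figure~\ref{fig:GMS} instance and verifying it is winning---i.e.\ checking that its strategy graph is acyclic and that every Captain move is a legal \emph{greedy} move (each chosen squad activates exactly the cops $h_r\cap\F(C_p)$ of Definition~\ref{def:greedy}). The delicate point is that greediness removes the Captain's freedom to activate an arbitrary subset of a squad, so one must verify that on the relevant components the forced set $h_r\cap\F(C_p)$ still shrinks the Robber's escape space, even though some previously blocked node is re-exposed to the Robber (the hallmark of non-monotonicity, exactly as analyzed in Example~\ref{ex:greedy-strat}). I expect the verification to mirror the left/right symmetric attack described there: attack one side with a width-$2$ squad, let the Robber flee across the ``central'' gadget, re-attack, and capture, tolerating the temporary loss of one boundary node. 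The remaining parts---the two inequalities and the classical $\textit{ghw}(Q)=2$, $\textit{hw}(Q)\geq 3$ facts---are either direct consequences of earlier results or citations to~\cite{GMS07}, so they require only careful bookkeeping rather than new ideas.
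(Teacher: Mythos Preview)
Your proposal is correct and follows essentially the same approach as the paper: the inequality chain is derived exactly as you describe (greedy tree projections are tree projections, and the monotone Robber--Marshals characterization of $\textit{hw}$ together with Fact~\ref{prop:greedy-monotone} gives the upper bound), and the strict separation is witnessed by precisely the Figure~\ref{fig:GMS} hypergraph from~\cite{GMS07,CJG08}, for which $\textit{ghw}=2$, $\textit{hw}=3$, and an explicit non-monotone width-$2$ greedy winning strategy is exhibited. Your anticipated ``main obstacle''---verifying that the forced greedy activations still yield an acyclic strategy graph despite a temporary loss of a boundary node---is exactly what the paper's figure records.
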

\begin{proof}
The first relationship is immediate: in the first inequality we use the fact that greedy hypertree decompositions are a special case of
generalized hypertree decompositions, while the second inequality holds because the notion of hypertree decomposition is characterized by the
monotone Robber and Marshals game, played on $\HG_Q$ by a Robber and $k$ Marshals~\cite{gott-etal-03}. This game is equivalent to play the
monotone game with one Marshal on the pair of hypergraphs $(\HG_Q,\HG_Q^{k})$, which is the same as playing the monotone Robber and Captain
game.

For the strict upper bound  $\textit{gr-hw}(Q) < \textit{hw}(Q)$, consider the query $Q_0$, taken from~\cite{CJG08,GMS07}, whose hypergraph
$\HG_{Q_0}$  is depicted in the left part of Figure~\ref{fig:GMS}. For this query, it is shown in~\cite{GMS07} that $\textit{hw}(Q_0)=3$ and
$\textit{ghw}(Q_0)=2$. However, $\textit{gr-hw}(Q_0)=2$ holds. Indeed, there is a winning greedy strategy for the Captain in the game played on
$(\HG_{Q_0},\HG_{Q_0}^{2})$, as shown in the central part of Figure~\ref{fig:GMS}, and thus there exists a greedy tree projection of
$\HG_{Q_0}$ w.r.t.~$\HG_{Q_0}^{2}$.

In the figure, the set of selected cops at each step is underlined in such a way that the reader may identify the original pair of hyperedges
from $\HG_{Q_0}$ that forms the chosen squad in $\HG_{Q_0}^{2}$. Note that the strategy is non-monotone, as it is witnessed by the right branch
where the Robber can return on the node $B$. However, by using the construction in Theorem~\ref{thm:greedy}, it can be turned into a monotone
(while not greedy) one, by removing the escape door $B$ in the first move of the Captain (see the right part of the figure). From the monotone
strategy, we immediately get the desired tree projection.~\hfill~$\Box$
\end{proof}

More general examples are given by the {\em subedge-based} decomposition methods, defined in~\cite{GMS07}. Recall that a subedge-method ${\tt
DM}$ is based on a function $f$ associating with each integer $k \geq 1$ and each hypergraph $\HG_Q=(V,E)$ of some query $Q$ a set $f (\HG_Q,
k)$ of subedges of $\HG_Q$, that is, a set of subsets of hyperedges in $E$. Moreover, the set of width-$k$ ${\tt DM}$-decompositions of $Q$ can
be obtained as follows: (1) obtain a hypertree decomposition $\HD$ of $\HG_f = (V, E\cup f (\HG, k))$, and (2) convert $\HD$ into a generalized
hypertree decomposition of $\HG_Q$ by replacing each subedge $h\in f (\HG_Q, k)\setminus E$ occurring in $\HD$ by some hyperedge $h'\in E$ such
that $h\subseteq h'$ (which exists because $h$ is a subedge).

Because such a method is based on width-$k$ hypertree decompositions, in the tree projection framework it can be recast as follows. A width-$k$
${\tt DM}$-decomposition is any tree decomposition of $\HG_Q$ w.r.t.~$\HG_f^k$ associated with some monotone winning strategy of the Robber and
Marshal game on this pair of hypergraphs. On the other hand, according to its greedy variant $\textit{greedy-}{\tt DM}$, the width-$k$
decompositions are the greedy tree projections of $\HG_Q$ w.r.t.~$\HG_f^k$. It follows that the greedy variant of this method is more powerful.

\begin{fact}\label{fact:subedge}
Let ${\tt DM}$ be any {subedge-based} decomposition method. Let $k\geq 1$ and let $Q$ be a query. Then, a width-$k$ ${\tt DM}$-decomposition of
$Q$ exists only if a width-$k$ $\textit{greedy-}{\tt DM}$-decomposition of $Q$ exists. The converse does not hold, in general.
\end{fact}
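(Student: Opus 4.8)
The plan is to read Fact~\ref{fact:subedge} as the specialization of Theorem~\ref{thm:comparazione} (the inclusion $\C_{rm}\subseteq\C_{gtp}$, together with its strictness) to the particular resource hypergraph $\HG_f^k$ determined by the subedge function $f$. The first thing I would do is make precise, in game-theoretic terms, the two sides of the claimed implication, relying on the recasting discussed immediately before the statement: a width-$k$ ${\tt DM}$-decomposition of $Q$ exists if, and only if, the Marshal has a monotone winning strategy in the Robber and Marshal game on $(\HG_Q,\HG_f^k)$; and a width-$k$ $\textit{greedy-}{\tt DM}$-decomposition of $Q$ exists if, and only if, the Captain has a greedy winning strategy in the Robber and Captain game on the \emph{same} pair $(\HG_Q,\HG_f^k)$, equivalently, a greedy tree projection of $\HG_Q$ w.r.t.~$\HG_f^k$ exists.

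For the forward (``only if'') implication I would chain the available results. Assuming a width-$k$ ${\tt DM}$-decomposition exists, the Marshal has a monotone winning strategy on $(\HG_Q,\HG_f^k)$; by Fact~\ref{prop:greedy-monotone} this is equivalent to the existence of a monotone winning greedy strategy for the Captain on $(\HG_Q,\HG_f^k)$. Since greedy strategies are not required to be monotone, such a strategy is in particular a greedy winning strategy, and Theorem~\ref{thm:greedy} then yields a greedy tree projection of $\HG_Q$ w.r.t.~$\HG_f^k$, i.e., a width-$k$ $\textit{greedy-}{\tt DM}$-decomposition. This is exactly the inclusion $\C_{rm}\subseteq\C_{gtp}$ instantiated at the pair $(\HG_Q,\HG_f^k)$, so no fresh argument is needed here.

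For the converse I would exhibit a single separating instance rather than argue in general. The cleanest choice is the trivial subedge method obtained by setting $f(\HG_Q,k)=\emptyset$ for all $\HG_Q$ and $k$: then $\HG_f=\HG_Q$, so the ${\tt DM}$-decompositions coincide with the hypertree decompositions of $\HG_Q$, whence a width-$k$ ${\tt DM}$-decomposition exists iff $\textit{hw}(Q)\leq k$, while its greedy variant yields precisely the greedy hypertree decompositions, so a width-$k$ $\textit{greedy-}{\tt DM}$-decomposition exists iff $\textit{gr-hw}(Q)\leq k$. I would then reuse the query $Q_0$ of Fact~\ref{greedy-hw}, for which $\textit{gr-hw}(Q_0)=2<3=\textit{hw}(Q_0)$ already holds; taking $k=2$ gives an instance admitting a width-$2$ $\textit{greedy-}{\tt DM}$-decomposition but no width-$2$ ${\tt DM}$-decomposition, which shows that the converse fails.

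The one point requiring care---and the main, if modest, obstacle---is making the recasting airtight: confirming that condition~(4) of hypertree decompositions genuinely corresponds to the monotone restriction of the Robber and Marshal (equivalently Robber and Captain) game on $(\HG_Q,\HG_f^k)$, so that ``width-$k$ ${\tt DM}$-decomposition'' and ``monotone Marshal strategy'' may be used interchangeably, and verifying that the empty function is a legitimate subedge function, so that the standard hypertree method qualifies as subedge-based. Both facts are already supplied by the discussion preceding the statement and by Fact~\ref{greedy-hw}, so the proof reduces to assembling these pieces and requires no new computation.
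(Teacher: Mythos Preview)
Your proposal is correct and follows essentially the same approach as the paper: the forward implication is obtained by instantiating Theorem~\ref{thm:comparazione} at the pair $(\HG_Q,\HG_f^k)$ (you unpack its proof via Fact~\ref{prop:greedy-monotone} and Theorem~\ref{thm:greedy}, but the content is identical), and the failure of the converse is witnessed by the trivial subedge function $f(\HG_Q,k)=\emptyset$, which reduces the claim to Fact~\ref{greedy-hw}. The paper's proof is just the two-sentence version of what you wrote.
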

\begin{proof}
The first entailment follows from Theorem~\ref{thm:comparazione}. The fact that the converse does not hold in general, follows from
Fact~\ref{greedy-hw}, because the hypertree decomposition method is a subedge-based method (based on the function
$f(\HG_Q,k)=\emptyset$).~\hfill~$\Box$
\end{proof}

This is a remarkable result, as in~\cite{GMS07} some examples of subedge-based decomposition methods, such as the {\em component hypertree
decompositions}, are shown to generalize most previous proposals of tractable structural decomposition methods, such as hypertree and
spread-cut decompositions (in fact, all of them, but the approximation of fractional hypertree decomposition, later introduced in~\cite{M09}).
From Fact~\ref{fact:subedge}, their greedy variants are even more powerful.

\subsection{Tractability over Small Arity Structures}\label{smallArities}

We now consider the case of relational structures having small arity, which is a relevant special case in real-world applications.
In fact, observe that any variable that is not involved in any join operation in a conjunctive query (that is, any variable that occurs in one
atom only) is irrelevant and may be projected out in a preprocessing phase. It follows that the {\em effective arity} to be considered in our
structural techniques is actually determined by the largest number of variables that any atom has in common with other atoms (i.e., those
variables involved in join operations), independently of the arity of the relations in the original database schema. This number is often
small, in practice.\footnote{In fact, it is easy to further generalize this line of reasoning, by considering as ``effective arity'' the
maximum cardinality over the hyperedges in the GYO-reduct of $\HG_Q$. (Recall that the GYO reduct  of a hypergraph is obtained by iteratively
removing nodes that occur in one hyperedge only and hyperedges included in other hyperedges, until no further removal is possible---see,
e.g.,~\cite{ullm-89}.)}

Therefore, it is interesting to investigate whether the general problem of computing a tree projection of a pair of hypergraphs is any easier
in the case of small arity structures (for the sake of presentation, we just consider here the standard structure arity, leaving to the
interested reader the straightforward extension to the above mentioned ``effective arity''). We next show that the problem is indeed in
polynomial-time for bounded-arity structures, and it is moreover {\em fixed-parameter tractable (FPT)}, if the arity is used as a parameter of
the problem. This is not difficult to prove, but it was never stated before (as far as we know), and we believe it is important to pinpoint
this tractability result.

Recall that a problem is FPT if there is an algorithm that solves the problem in {\em fixed-parameter polynomial-time}, that is, with a cost
$f(k) O(n^{O(1)})$, for some computable function $f$ that is applied to the parameter $k$ only. In other words, this algorithm not only runs in
polynomial time if $k$ is bounded by a fixed number, but it also exhibits a ``nice'' dependency on the parameter, because $k$ is not in the
exponent of the input size $n$. Let \textit{p-TP} be the problem of computing a tree projection of $\HG_Q$ w.r.t.~$\HG_{\V}$, for a given pair
$(Q,\V)$, parameterized by the maximum arity of the relations occurring in $(Q,\V)$.

\begin{theorem}\label{theo:fp}
The problem $\textit{p-TP}$ is fixed-parameter tractable.
\end{theorem}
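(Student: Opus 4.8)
The plan is to exploit the game-theoretic characterization of tree projections together with the observation that bounding the arity neutralizes the only source of intractability of the general problem. Recall that a tree projection of $\HG_Q$ w.r.t.~$\HG_\V$ exists if, and only if, the Captain has a winning strategy in the Robber and Captain game on $(\HG_Q,\HG_\V)$, and that by Theorem~\ref{thm:costruzione} such a strategy may always be taken monotone, from which a tree projection is computable in polynomial time. The general problem is $\NP$-hard because a Captain's position $(h,M)$ may use \emph{any} subset $M\subseteq h$, and there are exponentially many such subsets. When the arity is used as the parameter $k$, however, every hyperedge $h\in\edges(\HG_Q)\cup\edges(\HG_\V)$ satisfies $|h|\le k$, so each view admits at most $2^k$ candidate subsets, and the whole configuration space of the game shrinks to size $2^{k}\,n^{O(1)}$.

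First I would bound the number of game configurations. A position $(h,M)$ with $h\in\edges(\HG_\V)$ and $M\subseteq h$ has $|M|\le k$, hence there are at most $|\edges(\HG_\V)|\cdot 2^k$ distinct positions; since each set $M$ induces at most $|\nodes(\HG_Q)|$ many $[M]$-components, the number of configurations $(h,M,C)$ is at most $|\edges(\HG_\V)|\cdot 2^k\cdot|\nodes(\HG_Q)|$, that is, $2^{k}\,n^{O(1)}$. I would then assemble the game graph explicitly over these configurations: from each one, enumerate the at most $|\edges(\HG_\V)|\cdot 2^k$ admissible next positions $(h_r,M_r)$ with $M_r\subseteq h_r\cap\F(C)$, and for each of them the corresponding options of the Robber. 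All the primitive operations involved---computing frontiers $\F(\cdot)$, $[M]$-components, and the options available after a move---are polynomial in $n$, exactly as already used in Lemma~\ref{lem:maxconf} and Theorem~\ref{thm:GreedyExistence}, so the whole graph is built within the claimed fixed-parameter polynomial bound.

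Next I would decide the existence of a \emph{monotone} winning strategy by computing, over this graph, the least set $W$ of configurations from which the Captain can force a capture using only monotone moves: a configuration belongs to $W$ if it is a capture configuration, or if there is an admissible next position $(h_r,M_r)$ all of whose Robber options (each contained in the current component, by monotonicity) already lie in $W$. This is the least fixpoint of a monotone operator on a finite set, hence computable by iterated marking in time polynomial in the graph size. By the characterization of~\cite{GS08} and Theorem~\ref{thm:costruzione}, the initial configuration $(\emptyset,\emptyset,\nodes(\HG_Q))$ lies in $W$ if, and only if, a tree projection of $\HG_Q$ w.r.t.~$\HG_\V$ exists; moreover, reading off the Captain's winning moves along $W$ yields a monotone winning strategy $\sigma$, from which a tree projection $\HG_a$ is obtained in polynomial time by setting $\nodes(\HG_a)=\nodes(\HG_Q)$ and $\edges(\HG_a)=\{\,M\mid \sigma(v)=(h,M)\ \text{for some configuration}\ v\,\}$, exactly as in the proof of Theorem~\ref{thm:greedy}. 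Summing the two phases gives an overall cost of $2^{k}\,n^{O(1)}$, which is fixed-parameter polynomial.

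The conceptual content lies entirely in the first observation: the arity bound caps the number of candidate sub-hyperedges $M\subseteq h$ at $2^k$, collapsing the combinatorial explosion responsible for the $\NP$-hardness of general tree projection. The step requiring the most care is the game-solving phase, where I must restrict to monotone moves---so that the fixpoint over shrinking components is well founded and a valid decomposition can actually be \emph{extracted}, not merely detected---and verify that every operation outside the $2^k$ factor is genuinely polynomial. Both points are routine given Theorem~\ref{thm:costruzione} and the polynomial-time implementability of the game primitives already exploited earlier in the paper.
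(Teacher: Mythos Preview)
Your proof is correct, and the core insight---that the arity bound caps the number of candidate subsets $M\subseteq h$ at $2^k$---is exactly the one the paper uses. The route, however, differs. The paper does not build and solve the full game graph directly. Instead, it computes the \emph{simplicial} hypergraph $\HG_s$ of $\HG_\V$, whose hyperedges are all nonempty subsets of hyperedges of $\HG_\V$ (computable in $O(2^k|\edges(\HG_\V)|)$), observes that tree projections of $(\HG_Q,\HG_\V)$ coincide with those of $(\HG_Q,\HG_s)$, and then notes that on $(\HG_Q,\HG_s)$ the \emph{greedy} strategies are exactly the unrestricted strategies on $(\HG_Q,\HG_\V)$---because every possible set of cops is now itself a squad. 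Theorem~\ref{thm:greedy} then yields the tree projection in polynomial time.

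So the paper reduces the problem to its own main result on greedy strategies, while you bypass the greedy machinery entirely and rely only on the monotone-strategy characterization of~\cite{GS08} (Theorem~\ref{thm:costruzione}), solving the bounded game by a direct AND/OR fixpoint. Your argument is more self-contained and would work even without Section~\ref{SECGAMES}; the paper's argument is shorter given the infrastructure already in place and nicely illustrates that greedy strategies become lossless once all sub-hyperedges are available as squads.
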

\begin{proof}
Let $(Q,\V)$ be an input pair for $\textit{p-TP}$, let $(\HG_Q,\HG_{\V})$ be the pair of associated hypergraphs, and let $k$ be the parameter.

Compute the simplicial version $\HG_s$ of the hypergraph $\HG_{\V}$, that is, the hypergraph having the same set of nodes as $\HG_{\V}$, and
where $\edges(\HG_s)= \{ h'\neq\emptyset \mid h'\subseteq h, h\in \edges(\HG_{\V})\}$. Therefore, $\edges(\HG_s)$ contains all subsets of every
hyperedge of $\HG_{\V}$.
Clearly, $\HG_s$ can be computed in time $O(2^k \times |\edges(\HG_{\V})|)$, and the tree projections of $(\HG_Q,\HG_{\V})$ are the same as the
tree projections of $(\HG_Q,\HG_s)$. To conclude, observe that any tree projection of the latter pair can be computed in polynomial-time by
Theorem~\ref{thm:greedy} and the fact that, having a squad for every possible set of cops in any squad/hyperedge of $\HG_{\V}$, the greedy
strategies in the game Robber and Captain on $(\HG_Q,\HG_s)$ are precisely the (unrestricted) strategies in the Robber and Captain game on
$(\HG_Q,\HG_{\V})$, which characterize the tree projections of $(\HG_Q,\HG_{\V})$.\footnote{Note that the same relationship holds for the
monotone strategies and, hence, for the Marshal's strategies in the Robber and Marshal game over the pair $(\HG_Q,\HG_s)$, as observed by
Adler~\cite{adler-thesis}.}~\hfill~$\Box$
\end{proof}

The above tractability result is smoothly inherited by all structural decomposition methods {\tt DM} such that the arity of the views is
$O(f(k))$ for some computable function $f$ that does not depend on the size of the input. For instance, this is the case for the methods based
on bounded (generalized hyper)tree decompositions, but not for fractional hypertree decompositions. In particular, if $w$ is the fixed maximum
width for a class of queries having bounded generalized hypertree width,  the maximum arity of the computed views is $w\times k$. Thus, if
$\textit{p-ghw}_w$ denotes the problem of computing a width-$w$ generalized hypertree decomposition of a query, parameterized by the maximum
arity of the query atoms, we immediately get the following result.

\begin{corollary}\label{fpt-ghd}
The problem $\textit{p-ghw}_w$ is fixed-parameter tractable.
\end{corollary}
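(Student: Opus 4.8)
The plan is to reduce $\textit{p-ghw}_w$ to the already-solved parameterized problem $\textit{p-TP}$ and then invoke Theorem~\ref{theo:fp}. First I would recall from Section~\ref{sec:framework} that $ghw(\HG_Q)\leq w$ holds if, and only if, there is a tree projection of $(\HG_Q,\HG_Q^{w})$, where $\HG_Q^{w}$ is the hypergraph whose hyperedges are all unions of at most $w$ hyperedges of $\HG_Q$. Moreover, this correspondence is effective: from a tree projection $\HG_a$ with $\HG_Q\leq \HG_a\leq \HG_Q^{w}$ one reads off a width-$w$ generalized hypertree decomposition in polynomial time, using a join tree of $\HG_a$ for the tree shape, the hyperedges of $\HG_a$ for the $\chi$-labels, and the (at most $w$) edges of $\HG_Q$ witnessing each view for the $\lambda$-labels. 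Hence, computing a width-$w$ decomposition of $Q$ amounts to computing a tree projection of the pair $(\HG_Q,\HG_Q^{w})$.

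The key observation is that this reduction behaves well with respect to the parameter $k$, the maximum arity of the query atoms. Each hyperedge of $\HG_Q^{w}$ is a union of at most $w$ hyperedges of $\HG_Q$, so its cardinality is at most $w\times k$; that is, the maximum arity of the views in $\HG_Q^{w}$ is bounded by $w\times k$, a function of $k$ alone, since $w$ is a fixed constant. Furthermore, $\HG_Q^{w}$ has only $O(m^{w})$ hyperedges, where $m=|\edges(\HG_Q)|$, and because $w$ is fixed this resource hypergraph is built in time polynomial in the input size. Therefore, passing from $Q$ to the instance $(\HG_Q,\HG_Q^{w})$ of $\textit{p-TP}$ is a polynomial-time reduction that maps the parameter $k$ to a new parameter at most $w\times k$. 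Applying Theorem~\ref{theo:fp}, which solves $\textit{p-TP}$ in fixed-parameter polynomial time $f(k')\,O(n^{O(1)})$ with $k'$ the maximum arity of the relations in the instance, on our instance we have $k'\leq w\times k$, so the running time is $f(w\times k)\,O(n^{O(1)})$. Since $w$ is a fixed constant, $g(k):=f(w\times k)$ is a computable function of $k$ only, and the exponent of $n$ does not depend on $k$, which is exactly the required fixed-parameter bound.

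The point to be careful about---more a bookkeeping check than a genuine obstacle---is to keep the two roles of $w$ and $k$ separate, so that the dependence on the \emph{parameter} $k$ stays confined to the factor $g(\cdot)$ while $w$ only contributes fixed constants (to the base of the exponential and to the exponent of the polynomial). Concretely, I would verify that neither building $\HG_Q^{w}$ nor running the FPT algorithm of Theorem~\ref{theo:fp} on top of it (which internally forms the simplicial closure of the views in time $O(2^{k'}\times|\edges(\HG_Q^{w})|)$ and then calls the polynomial-time procedure of Theorem~\ref{thm:greedy}) ever places $k$ in the exponent of $n$: the only $k$-dependent blow-up is the factor $2^{w\times k}$ coming from the arity $w\times k$ of the views, which is absorbed by $g$, whereas the $O(m^{w})$ many views and the polynomial of Theorem~\ref{thm:greedy} contribute only a fixed-degree polynomial in the input size.
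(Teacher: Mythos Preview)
Your proposal is correct and follows essentially the same approach as the paper: reduce to $\textit{p-TP}$ via the pair $(\HG_Q,\HG_Q^{w})$, observe that the views have arity at most $w\times k$ (with $w$ fixed), and invoke Theorem~\ref{theo:fp}. The paper's own argument is in fact terser than yours, stating only that the maximum arity of the computed views is $w\times k$ and that the result is therefore immediate; your additional bookkeeping (effectiveness of the correspondence, the $O(m^{w})$ bound on the number of views, and the check that $k$ never enters the exponent of $n$) is sound and makes the inheritance explicit.
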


\section{From Theory to Practice}\label{sec:practice}

Many recent works are using structural methods based on the computation of a tree projection of the given instance, such as generalized
hypertree decompositions or fractional hypertree decompositions, for answering queries to relational databases or solving constraint
satisfaction problems (CSPs), where constraints are represented as finite relations encoding the allowed tuples of values. Moreover, structural
methods find applications in game theory and combinatorial auctions, as well as in other fields (see~\cite{GGLS16} for more information and
references on these applications, with a focus on hypertree decompositions). This is quite natural because we are actually using a basic
hypergraph-theoretic notion that, in principle, may be useful in any application where acyclic instances are easy to deal with. In the rest of
the section, we discuss some of these applications.

\subsection{Using Tree Projections}

Tree projections represent transformations from a given problem to its acyclic variant. For instance, consider a conjunctive query $Q$ over a
database instance $\DB$, and assume that its associated hypergraph $\HG_Q$ is cyclic. Given a set of available views $\V$, any tree projection
$\HG_a$ of $\HG_Q$ with respect to $\HG_{\V}$ can be used to obtain an acyclic query $Q'$ on a database $\DB'$ that is  equivalent to $Q$ on
$\DB$:\footnote{We actually assume that the relations associated with views are not more restrictive than the original query. This is always
the case for the mentioned structural method. For a formal treatment of the general case, see~\cite{GS10b}.} For each hyperedge  $h$ of
$\HG_a$, compute a fresh atom such that its set of variables is $h$ and its relation is obtained by projecting on $h$ the relation associated
with any view $w\in\V$ whose set of nodes includes $h$ (such a view exists by definition of tree projection).
This immediately provides a polynomial-time upper bound on the running time of answering the query. Let $r$ be the size of the largest relation
associated with the views in $\V$, and let $m$ be the number of hyperedges of $\HG_a$, which is known to be bounded by the number of variables
(in the so-called normal form tree projection~\cite{GS08}). The above transformation is feasible in $O(m \cdot r)$, with each relation in the
new database $\DB'$ having at most $r$ tuples. Let $r'\leq r$ be the actual size of the largest relation of the database $\DB'$. The overall
complexity immediately follows by adding the cost of evaluating the new acyclic instance (e.g., by Yannakakis's algorithm~\cite{yann-81}),
which dominates the overall cost: the worst-case upper bound is $O(m \cdot (r'+s) \cdot \log (r'+s))$ time and $O(m \cdot (r'+s))$ space, where
$s$ is the size of the output.

As a further example of applications of methods based ont tree projections, we mention the EmptyHeaded relational engine that uses generalized
hypertree decompositions in its query planner~\cite{aberger-etal-16}. In~\cite{KalinskyEK16}, similar techniques based on structural
decompositions have been used to guide a flexible caching of intermediate results in the context of computing multiway joins.
In~\cite{amroun-etal-16}, a CSP solving technique based on generalized hypertree decompositions and using compressed representations for the
relations has been proposed, and its scalability has been assessed over well-known CPS benchmarks.

Finally note that algorithms based on such structural methods can be parallelized, as pointed out in~\cite{gott-etal-01}. Generalized hypertree
decompositions are indeed used for parallel query answering in the GYM algorithm~\cite{AfratiJRSU14}, which is a distributed and generalized
version of Yannakakis' algorithm for answering acyclic queries, specifically designed for the MapReduce framework~\cite{DeanG10}.

\subsection{Views beyond Structural Decomposition Methods}

Consider a pair of hypergraphs $\HG_1 \leq \HG_2$, of which we want to compute a tree projection $\HG_a$, with  $\HG_1 \leq \HG_a\leq \HG_2$.
The resource hypergraph $\HG_2$,  whose hyperedges define what we have called views, is completely arbitrary in the general tree projection
framework we deal with. As we have seen, specific algorithms for defining views lead to different decomposition methods. We mentioned methods
where views are computed in polynomial time (when we talk about islands of tractability), but the tree projection framework is actually much
more general.

Views may represent any subproblem that we can use to solve the given instance, or that is already available from previous computations (e.g.,
materialized views in databases). In some applications, one may relax the polynomial-time constraint and consider instead fixed-parameter
tractable computations, for some (application-specific) parameter. In other applications, views may be associated with subproblems that can be
solved by using non-structural properties. With this respect, we mention an important line of research in constraint satisfaction, looking at
restrictions on the form of specific (fixed) constraint relations, regardless of the structure of constraint scopes, see, e.g.,~\cite{CJ06}.

There are also hybrid approaches, looking at both structure and data~\cite{GS14,CZ11}. In concrete applications on big databases, the hybrid approach is
mandatory: views should be subqueries such that their computation cost is estimated to be low (that is, less than some given threshold),
according to information on the actual database instance, such as selectivity of attributes, keys, cardinality of relations, indices, and so
on. In~\cite{Greco2007,Ghionna2011}, a query optimizer taking into account a simple cost model for subquery evaluation, together with views
based on the hypertree decomposition method, has been implemented. The optimizer can be put on top of any existing database management system
supporting JDBC technology, by transparently replacing its standard optimization module. The results demonstrate a significant gain obtained by
using query plans based on hypertree decompositions on queries involving more than two atoms. Further implementations directly inside
open-source Database Management Systems are subjects of current work.

\subsection{In Practice}

There is room for practical applications of the results presented in this paper to improve the efficiency of the above solutions, besides the
theoretical interest in providing a better understanding of the  difference between the power of general strategies and the power of controlled
non-monotonicity in the Robber and Captain game on pair of hypergraphs.

Consider the result on the fixed-parameter tractability of computing a tree projection of $\HG_1$ w.r.t. $\HG_2$, where the maximum cardinality
of the hyperedges in $\HG_2$ (that is, the arity of views, in database terms) is used as the parameter, say $k$, of the problem. We proved that
 a tree projection, if any, can be computed in $O(2^k n^{O(1)})$.
We believe that this is a useful result because it means that, in all those instances where the number of variables is not large, an effective
query optimization (with respect to arbitrary views and hence with respect to any decomposition method) is feasible in reasonable time. Indeed,
the computation of the decomposition depends only on hypergraphs (and not on the database) and, unlike other fixed-parameter algorithms, the
algorithm described in Theorem~\ref{theo:fp} is ``practical,'' as there are no (hidden) huge constants and the dependence on the arity
parameter is single-exponential. This is of particular interest in database applications, where small queries over huge amount of data are the
typical instances. Furthermore, in such a context, the same queries are frequently run over a varying database, so that a good query
optimization pays over the time.

We can be even more concrete by focusing on the specific decomposition method based on generalized hypertree decompositions. By
Corollary~\ref{fpt-ghd}, a width-$w$ generalized hypertree decomposition of the hypergraph $\HG_Q$ of a given query $Q$ can be computed in
$O(2^k n^{O(1)})$, where $k$ is the maximum number of variables occurring in any query atom. We next point out that it is very convenient to
look for decompositions with the smallest possible widths, which means using the most powerful decomposition methods (that are affordable in
the available optimization time). Say $n$ be the combined size of the query $Q$ and the database, and consider the query answering problem
parameterized by the generalized hypertree width of $Q$, say $w$. It is well known that this problem is not fixed-parameter tractable, which
means that (under usual fixed-parameter complexity assumptions) an exponential dependency on the parameter of the form $O(n^{f(w)})$ is
unavoidable. It follows that even small savings in the width leads to exponential savings in the query evaluation time (and here $n$ includes
the database size). It is worthwhile noting that the same exponential dependency holds if we consider as parameter $w$ the notion of width
associated with other mentioned decomposition methods, in particular the treewidth. We thus argue that investing some time in computing
low-width decompositions is very convenient even for queries having small arities.
Indeed, $\textit{ghw}(Q)\leq \textit{tw}(Q)$ always holds,
and for some queries $\textit{tw}(Q) = k \cdot \textit{ghw}(Q)$.

The main algorithmic result of this paper, that is, the notion of greedy tree projection and its tractability, is particularly interesting
whenever we deal with instances having large hypergraphs. This is often the case in constraint satisfaction problems, where there are instances
with hundreds of constraints, for which the computation of a generalized hypertree decomposition having the minimum possible width may not be
affordable. Many practical approaches for these applications adapt heuristics developed for the tree decomposition method, or use
the notion of hypertree width (see, e.g.,~\cite{dech-03,amroun-etal-16}). However, as pointed out above, if we are able to find better decompositions, we are guaranteed an
exponential-saving in the (worst-case) computation time. In this respect, using greedy tree projections may be a good choice. In particular,
the greedy method that we called greedy hypertree decomposition provides always better (or equal) results than hypertree decompositions (and hence than
tree decompositions), and it is computable in polynomial time for any fixed, bound on the width.

\section{Related Literature on ``Cops and Robbers'' Games}\label{sec:related}

In this paper we are mainly interested in games defined over hypergraphs or pairs of hypergraphs, such as those studied in~\cite{adler04} (see Section~\ref{sec:capVSmar}).
 We are not aware of many further works of this kind, apart from the
  {\em Robber and Army game}~\cite{GM06}, which was defined to approximate the notion of {fractional hypertree decomposition}. This game is indeed a variation of the Robber and Marshals game that characterizes hypertree decompositions, but this time marshals are replaced by a more powerful general, who is in charge of an army of $r$ battalions of soldiers (with $r$ being a rational number).
 The general may distribute her soldiers on the hyperedges in any arbitrary way (rational allocations are allowed).
  A node of the hypergraph is blocked (the robber cannot go through that node) if the number of soldiers on all hyperedges that contain this node adds up to the strength of at least one battalion. The game is then played in a monotonic way, like the Robber and Marshals game.

 As a matter of fact, all these games, comprising the {Robber and Captain} game~\cite{GS08} at the core of the present work, can be viewed as variations of the
  Robber and Cops game defined by Seymour and Thomas over
graphs~\cite{ST93}, in order to characterize the notion of treewidth.
  In this game, a number of Cops have to capture a Robber that can run at
great speed along the edges of a {graph}, while being not permitted to run trough a node that is controlled by a Cop. In particular, the Cops
can move over nodes by using helicopters and, before they land, the Robber is fast and can run trough those nodes that are left or not
yet occupied before the move is completed. A graph has treewidth bounded by $k$ if, and only if, there is a winning strategy for $k+1$ Cops in
this game~\cite{ST93}.
Unlike the Robber and Marshals (or Captain, or Army) game,
in the Robber and Cops game, restricting strategies to be monotone does not reduce in any way the power of cops.

By looking at the game defined by Seymour and Thomas, one might naturally wonder what happens if  the use of helicopters is not allowed, so that
Cops must move along the edges of the graph, precisely as the Robber does.
The study of this variant goes back to the eighties, when it was introduced by Winkler and Nowakowski~\cite{NW83} and independently by
Quilliot~\cite{Q83} under the name of the \emph{Cops and Robbers} game.
Since then, this game has been the subject of intense study (see the book by Bonato and Nowakowski~\cite{BN11}, and the references therein).
In particular, in the original formulation, the game proceeds in rounds, each consisting of a Cop turn followed by a Robber turn. In each
round, each cop may remain on her current vertex or move to an adjacent vertex, after which the robber likewise chooses to remain in place or
move to an adjacent vertex. For this game, several efforts have been spent to characterize the \emph{cop number} of (classes of) graphs, i.e.,
the minimum number of Cops needed to capture the Robber, regardless of her moves.

Graphs of cop number 1 were characterized already in the above mentioned seminal papers~\cite{NW83,Q83}. These graphs are based on a suitable
linear ordering of their vertices and can be recognized in polynomial time. Similarly, for any fixed natural number $k>0$, deciding whether a
graph has cop number bounded by $k$ is feasible in polynomial time; indeed, just notice that the number of possible different configurations
is $O(n^{k+1})$ for a graph with $n$ vertices. However, for $k$ being part of the input, it has been recently shown that it is $\rm
EXP$-complete to decide if the cop number of a graph does not exceed $k$~\cite{K15}, hence confirming a long-standing conjecture by Goldstein
and Reingold~\cite{GR95}---further complexity results for variants of the game can be found in the works by Fomin et al.~\cite{FGP12} and
Mamino~\cite{M13}.
For general graphs on $n$ vertices it is known that $\Omega(\sqrt{n})$ cops may be needed, and the celebrated Meyniel's conjecture~\cite{F87}
states that the cop number of a connected vertex graph is $O(\sqrt{n})$. Moreover, exact or approximate values of the cop number of several
classes of graphs have been derived so far, including plan graphs~\cite{AF84}, bounded genus graphs~\cite{S01}, and intersection
graphs~\cite{GJKK13}, just to name a few.

A variant of the Robber and Cops game discussed above assumes that the Robber is faster than the Cops in that, at each move, she can transverse
$s\geq 1$ edges of the graph. This variant has been introduced by Fomin et al.~\cite{FGKNS10}, who also showed that computing the cop number in
this setting is $\rm NP$-hard, for every fixed $s$, even on classes of split graphs. In particular, the case where the Robber has an unbounded
speed ($s=\infty$) is very related with the game by Seymour and Thomas characterizing treewidth, except for the use of helicopters. In fact, it turns out that, over planar graphs $G$, the cop number for this game is $\Theta(tw(G))$~\cite{AM15}. Further results
on this variant can be found in the work by Frieze et al.~\cite{FKL12}.
Yet another interesting variant of the Cops and Robber game where the robber is invisible has been also studied in the literature
(see~\cite{DDTY15}, and the references therein).

While the above games are defined and mainly studied over graphs,
the extension of graph games to hypergraph games is sometimes natural.
For instance, properties of the hypergraph version of the classical Cops and Robber game in~\cite{NW83,Q83} are studied in~\cite{Baird}.



\section{Conclusion and Future Work}\label{sec:conclusion}

By exploiting a recent game characterization of tree projections, we identified new islands of (structural) tractability based on a greedy
version of the powerful non-monotonic strategies in hypergraph games. We show that such greedy strategies can be computed efficiently, and can
always guide us towards the computation of useful decomposition trees. In fact, the proposed approach immediately provides larger ``greedy''
extensions of the most powerful structural decomposition methods defined in the literature.

Furthermore, again using the game-theoretic characterization of tree projections, we pinpoint the fixed-parameter tractability of this notion
(and hence of most structural decomposition methods) when the arity is used as the parameter. This models what happens if small arity instances
are considered, which often occurs in practice.

We believe that the results presented in the paper may be very useful in real-world applications and we are currently working on direct implementations of the proposed techniques in open-source database management systems.
Moreover, note that these results find applications in all those problems that can be solved efficiently on acyclic and  quasi-acyclic
instances, even outside the Database area we focused on. In particular, our results can be exploited immediately for solving Constraint
Satisfaction Problems.

Besides the implementation of efficient algorithms for the computations of greedy hypertree decompositions of large hypergraphs, a research
question regards the distance between generalized and greedy hypertree decompositions. From the known relationships with hypertree
decompositions, we immediately get that, for any hypergraph $\HG$, $\textit{ghw}(\HG) \leq \textit{gr-hw}(\HG) \leq \textit{hw}(\HG) \leq 3
\cdot \textit{ghw}(\HG) +1$. However, whether or not these bounds are tight is currently open.

Moreover, it would interesting to investigate whether the greedy techniques used in this paper for the Robber and Captain game on pair of
hypergraphs can be somehow useful for other kinds of (hyper)graphs games, such as those described in Section~\ref{sec:related}.

\end{document}